
\documentclass[9pt,shortpaper,twoside,web]{ieeecolor}
\usepackage{generic}


\usepackage{amsmath} 
\usepackage{amssymb} 
\newcommand {\matr}[2]{\left[\begin{array}{@{}#1@{}}#2\end{array}\right]}

\usepackage{lipsum}
\usepackage{mathtools}
\usepackage{cuted}
\usepackage{algorithm}
\usepackage[noend]{algpseudocode}
\setlength{\textfloatsep}{10pt}
\usepackage{graphicx}
\graphicspath{{images/}}

\usepackage{tabularx,booktabs}
\usepackage{accents}
\usepackage{expl3,xparse}



%





\newtheorem{Theorem}{Theorem}
\newtheorem{Lemma}[Theorem]{Lemma}

\newtheorem{Corollary}[Theorem]{Corollary}
\newtheorem{Assumption}[Theorem]{Assumption}
\newtheorem{Definition}[Theorem]{Definition}
\newtheorem{Remark}{\it{Remark}}


\usepackage{tikz}
\usepackage{cite}
\usepackage{soul}

\begin{document}

\title{A Gauss-Newton-Like Hessian Approximation for Economic NMPC}

\author{Mario Zanon
	\thanks{
		M. Zanon is with IMT School for Advanced Studies Lucca, Italy. e-mail: mario.zanon@imtlucca.it. The author thanks Rien Quirynen, S\'ebastien Gros, Robin Verschueren, Moritz Diehl for the fruitful discussions on Economic NMPC, and nonlinear programming; and Robert Hult for the fruitful discussions on energy-optimal autonomous driving.
	}%
}

\maketitle


\begin{abstract}
	Economic Model Predictive Control (EMPC) has recently become popular because of its ability to control constrained nonlinear systems while explicitly optimizing a prescribed performance criterion. Large performance gains have been reported for many applications and closed-loop stability has been recently investigated. However, computational performance still remains an open issue and only few contributions have proposed real-time algorithms tailored to EMPC. We perform a step towards computationally cheap algorithms for EMPC by proposing a new positive-definite Hessian approximation which does not hinder fast convergence and is suitable for being used within the real-time iteration (RTI) scheme. We provide two simulation examples to demonstrate the effectiveness of RTI-based EMPC relying on the proposed Hessian approximation.
\end{abstract}

\begin{IEEEkeywords}
	Economic model predictive control.
\end{IEEEkeywords}

\section{Introduction}\label{sec:intro}

Model Predictive Control (MPC) is an advanced control technique which can explicitly account for nonlinear constrained dynamics while minimizing a prescribed cost. Traditionally the problem has been formulated as a \emph{tracking} formulation minimizing some distance from a prescribed reference. 
Instead, in so-called \emph{economic} formulations the cost is not directly related to a setpoint, but rather to a performance index that should be optimized. While an improvement in closed-loop performance is expected over tracking formulations, Economic MPC (EMPC) poses challenges both in terms of stability guarantees and computational burden: the former has been widely studied, see, e.g.,~\cite{Diehl2011,Amrit2011a,Muller2013a,Mueller2015,Zanon2013d,Zanon2017e,Grune2013a, Stieler2014b, Faulwasser2015g, Faulwasser2015a,Zanon2018a,Faulwasser2018,Gruene2018}; however, the latter has been only partially investigated, see, e.g.,~\cite{Quirynen2016,Quirynen2017a,Shin2019,Verschueren2017}.

The main algorithmic challenges in EMPC stem from the impossibility of relying on a Gauss-Newton Hessian approximation and the need to compute second-order sensitivities and the regularize the Lagrangian Hessian to ensure positive-definiteness.
The first issue has been investigated in, e.g.,~\cite{Quirynen2016}, while the second one has been investigated in ~\cite{Verschueren2017}. A tracking scheme called Locally Equivalent To Economic MPC (LETEMPC)~\cite{Zanon2014d,Zanon2016b,Zanon2017a}, delivers a first-order approximation of the economically optimal feedback law and, therefore, yields only approximate economic optimality.

In this paper, we present a new Hessian approximation for EMPC which does not require any additional online computation with respect to tracking MPC and avoids the computation of second-order sensitivities and Hessian regularizations. The proposed Hessian approximation, however, is such that the algorithm retains good convergence properties and can be seen as a sort of Gauss-Newton Hessian approximation for economic MPC, even though the cost is not of least-squares type.
The proposed algorithm solves the economic MPC problem to full optimality, and can also be seen as the approximate economic MPC approach~\cite{Zanon2016b,Zanon2017a} with a gradient correction which guarantees full economic optimality. 

This paper is structured as follows. We introduce the problem in Section~\ref{sec:formulation} and in Section~\ref{sec:preliminaries} we establish a set of preliminary results. We introduce the new Hessian approximation in Section~\ref{sec:hessian_approximation}, where we also prove stability of RTI-based EMPC, provided that our Hessian approximation is used. We demonstrate the theoretical results in simulations in Section~\ref{sec:simulations} and conclude in Section~\ref{sec:conclusions}.

\section{Problem Formulation and Main Contribution}
\label{sec:formulation}
We consider nonlinear discrete-time systems 
\begin{align*}
	x_{k+1}=f(x_k,u_k), && x\in\mathbb{R}^{n_x}, \ u\in\mathbb{R}^{n_u},
\end{align*}
that shall be operated such that constraints $h(x_k,u_k)\geq0$ are satisfied and the cost $\sum_{k=0}^\infty \ell(x_k,u_k)$ is minimized. 
MPC approximates the infinite-horizon problem by optimizing over a finite horizon $N$. At every sampling instant, the state measurement $\hat x_0$ is received, an Optimal Control Problem (OCP) is solved, and the first control input is applied to the system. At the next sampling instant the procedure is repeated to close the feedback loop.

Economic MPC consists in repeatedly solving
\begin{subequations}
	\label{eq:mpc}
	\begin{align}
	w^*:= \arg\min_{w} \ \ & \sum_{k=0}^{N-1} \ell(x_k,u_k) + V_\mathrm{f}(x_N) \\
	\mathrm{s.t.} \ \ & x_0 - \hat x_0=0, \label{eq:mpc_ic}\\
	& x_{k+1} - f(x_k,u_k)=0, && k\in\mathbb{I}_0^{N-1}, \label{eq:mpc_dyn} \\
	& h(x_k,u_k) \geq 0, && k\in\mathbb{I}_0^{N-1},  \label{eq:mpc_pc} \\
	& x_N \in \mathbb{X}_\mathrm{f}, \label{eq:mpc_tc}
	\end{align}
\end{subequations}
where we define vector $w := (w_0,w_1,\ldots,w_{N-1},w_N)$, with $w_k:=(x_k,u_k), \ k\in\mathbb{I}_0^{N-1}$ and $w_N:=x_N$; finally, $\mathbb{I}_a^b:=\{a,a+1,\ldots,b\}$. 
The terminal cost $V_\mathrm{f}$ and constraint~\eqref{eq:mpc_tc} are design parameters.  
The MPC feedback policy is $\pi(\hat x_0)=u^*_0$. Throughout this paper we assume that a minimizer of Problem~\eqref{eq:mpc} exists and all functions are sufficiently smooth, i.e., $f,\, h,\, \ell,\, V_\mathrm{f} \in C^2$.

The main difficulties relative to economic MPC are (a) the difficulty in proving stability and (b) the computational burden associated with it.
Both difficulties stem from the fact that $\ell(x,u) \ngeq \alpha(\|x\|)$. In the following, we label a problem as economic if $\nexists \, \alpha\in \mathcal{K}$ s.t. $\ell(x,u) \geq \alpha(\|x\|)$. For more details on stability proofs for economic MPC, we refer to~\cite{Diehl2011,Amrit2011a,Muller2013a,Mueller2015,Zanon2013d,Zanon2017e,Grune2013a, Stieler2014b, Faulwasser2015g, Faulwasser2015a,Zanon2018a,Faulwasser2018,Gruene2018}. 
In this paper we address problem (b). 

Both issue (a) and (b) are milder in case of tracking MPC, since stability is easier to enforce~\cite{Rawlings2009b} and the least-squares cost makes it possible to deploy efficient algorithms to solve the problem in real-time, including the popular Real-Time Iteration (RTI) scheme. 
Tracking MPC has been widely studied in the literature, see, e.g.,~\cite{Rawlings2009b,Grune2011,Mayne2014} and references therein. The main drawback of tracking MPC is that, since it penalizes deviations from the optimal steady-state $(x_\mathrm{s},u_\mathrm{s})$, typically with a quadratic penalty $\|(x_\mathrm{s},u_\mathrm{s})\|_W^2$, it does not account for performance $\ell(x_k,u_k)$ during transients, such that the closed-loop cost can significantly increase.
In order to combine the benefits of tracking and economic MPC, in~\cite{Zanon2016b} a locally equivalent to economic MPC (LETEMPC) formulation with quadratic cost has been proposed which delivers feedback policy $\pi^\mathrm{t}(\hat x_0)$ satisfying
\begin{align*}
	\| \pi^\mathrm{t}(\hat x_0)-\pi(\hat x_0) \| = O(\|\hat x_0 - x_\mathrm{s}\|^2).
\end{align*}

Inspired by the LETEMPC formulation, we propose a hybrid formulation, i.e., an EMPC formulation which relies on the Gauss-Newton Hessian approximation of LETEMPC, calculated as in~\cite{Zanon2016b}, to reduce online computations. In this context, it is important to underline that exact Hessian requires one not only to compute second-order derivatives, but also to make sure that the reduced Hessian is positive definite, both of which can be computationally demanding. Finally, as a further motivation, some QP solvers require that the full Hessian is positive definite.

\subsection{Main Contribution}

The main contribution of this paper can be summarized as:

\vspace{0.5em}
\begin{center}
	\parbox{0.8\linewidth}{ \emph{ We propose a new Hessian approximation for EMPC which (a) is positive-definite, (b) enjoys approximation properties equivalent to those of Gauss-Newton Hessian approximations, and (c) guarantees statbility when used in combination with the RTI framework. }}
\end{center}

We will formalize this statement in Theorem~\ref{thm:red_hess}, and Theorem~\ref{thm:rti_stability}. The most important implication of Theorem~\ref{thm:red_hess} is that quick convergence can be obtained without the need to compute online second-order sensitivities nor Hessian regularizations enforcing positive-definiteness. Additionally, while the RTI scheme has been successfully applied to economic MPC in practice~\cite{Quirynen2014c,Quirynen2017a,Verschueren2016b,Gros2014b}, the standard stability proof from~\cite{Diehl2005b} does not directly apply to EMPC. We close this gap by proving in Theorem~\ref{thm:rti_stability} that the stability guarantees provided in~\cite{Diehl2005b} extend to RTI-based economic MPC, provided that the proposed Hessian approximation is used. With slight abuse of terminology, we will refer to the proposed Hessian approximation as the Gauss-Newton (GN) Hessian approximation for EMPC, since it is a GN Hessian approximation for LETEMPC.

Since periodic operation might outperform steady-state operation, 
the stability analysis for the steady-state case has been extended to the periodic case~\cite{Zanon2013d,Zanon2017e,Kohler2018,Mueller2016} and a periodic variant of the LETEMPC has been proposed in~\cite{Zanon2017a}. The extension of our setting to the periodic case is possible, but omitted for the sake of simplicity. 



\section{Preliminaries}
\label{sec:preliminaries}

In this section we introduce Sequential Quadratic Programming (SQP) and recall existing results on stability of economic MPC, based on the concept of \emph{strict dissipativity} and \emph{cost rotation}. In the last part of the section, we present a novel insight about the cost rotation and its impact on the SQP iterates, which will be useful next. All developments also apply to the interior-point framework.


\subsection{Sequential Quadratic Programming}
Consider an NLP of the form 
\begin{align}
\label{eq:nlp}
\min_{w} \ \ & J(w) & \mathrm{s.t.} \ \ & \hat g(w) = 0, && \hat h(w) \geq 0,
\end{align}
with Lagrangian
$\mathcal{\hat L}(z) = J(w) - \hat \lambda^\top \hat g(w) - \hat \mu^\top \hat h(w)$, primal-dual variable $z=(w,y)$, and dual variable $y=(\lambda,\mu)$.
Starting from an initial guess $z^{(0)}=(w^{(0)},y^{(0)})$, SQP computes the solution to~\eqref{eq:nlp} iteratively by relying on the update
\begin{align*}
z^{(i+1)} = z^{(i)} + t \Delta z^{(i)}, && \Delta z^{(i)} := (w^{\mathrm{QP}_i}, y^{\mathrm{QP}_i}-y^{(i)}),
\end{align*}
with $t$ a step length and $(w^{\mathrm{QP}_i},y^{\mathrm{QP}_i})$ the optimal solution of
\begin{align*}
\min_{w} \ \ & \frac{1}{2} w^\top L^{(i)} w + \nabla J(w^{(i)})^\top w \\
\mathrm{s.t.} \ \ & \nabla \bar g(w^{(i)})^\top w + \bar g(w^{(i)}) = 0, \\
& \nabla \bar h(w^{(i)})^\top w + \bar h(w^{(i)}) \geq 0.
\end{align*}
Here, $L^{(i)}$ is the Lagrangian Hessian, or a suitably selected approximation. Local minima are characterized by the Strong Second-Order Sufficient Conditions (SSOSC), i.e., $Z^\top L^{(i)} Z \succ 0$, with $Z$ the null space of the Jacobian of the strongly active constraints $Y_{\mathbb{A}_\mathrm{s}}$. This requirement must be enforced throughout the iterates in order to guarantee descent. Therefore, we denote $\left [ \nabla^2_{ww} \mathcal{\hat L} \right ]_+$ the modification of the Hessian of the Lagrangian such that $Z^\top \left [ \nabla^2_{ww} \mathcal{\hat L} \right ]_+Z \succ 0$ holds. Throughout this paper we assume that the Linear Independence Constraint Qualification (LICQ) holds, i.e., $Y_{\mathbb{A}_\mathrm{s}}$ is full row rank. For more details on the topic, we refer the interested reader to~\cite{Nocedal2006}. 

%

For the economic MPC Problem~\eqref{eq:mpc}, we define the primal-dual variables $z:=(w,\lambda,\mu,\nu)$ and the Lagrangian 
\begin{align*}
	\mathcal{L}(z) &:=  \sum_{k=0}^{N} \mathcal{L}_k(z)- \lambda_{0}^\top (x_0-\hat x_0),
\end{align*}
where, for $k\in\mathbb{I}_0^{N-1}$, we define
\begin{align*}
	\mathcal{L}_k(z) &:= \ell(w_k) - \lambda_{k+1}^\top (x_{k+1} - f(w_k)) - \mu_k^\top h(w_k), \  k\in\mathbb{I}_0^{N-1}, \\
	\mathcal{L}_N(z) &:= V_\mathrm{f}(x_N) + \nu^\top g_\mathrm{f}(x_N).
\end{align*}
Function $g_\mathrm{f}$ lumps together the (equality or inequality) constraints defined in~\eqref{eq:mpc_tc}. 
For ease of notation, for any function $a$ we denote $a(x_k,u_k)$ as $a(w_k)$; and $\nabla a_k=\nabla a(w_k)$, $\nabla a_k^{(0)}=\nabla a(w_k^{(0)})$ and $\nabla_{x} a_k=\nabla_x a(x_k,u_k)$, $\nabla_{u} a_k=\nabla_u a(x_k,u_k)$.

\subsection{Economic MPC and Rotated Economic MPC}
\label{sec:preliminaries_stability}
We recall next some concepts used to analyze economic MPC.
We define the (single-stage) Steady-State Optimization Problem (SOP)%
\begin{align}%
\label{eq:steady_state}%
\min_{x,u} \ \ \ell(x,u) && \mathrm{s.t.} \ \ x-f(x,u)=0, && h(x,u) \leq 0,
\end{align}%
and assume, without loss of generality, that the origin is the unique optimal solution, i.e., $w_\mathrm{s}:=(x_\mathrm{s},u_\mathrm{s})=(0,0)$; and%
\begin{align*}%
\ell(x_\mathrm{s},u_\mathrm{s}) = 0, && V_\mathrm{f}(x_\mathrm{s}) = 0 .
\end{align*}%
Finally, we define the SOP multipliers $\lambda_\mathrm{s}$, $\mu_\mathrm{s}$ and Lagrangian as $$\mathcal{L}_\mathrm{s}(w_\mathrm{s},\lambda_\mathrm{s},\mu_\mathrm{s}):=\ell(w_\mathrm{s})+\lambda_\mathrm{s}^\top (x_\mathrm{s}-f(w_\mathrm{s})) + \mu_\mathrm{s}^\top h(w_\mathrm{s}),$$ with $\lambda_\mathrm{s}\neq0$ in general~\cite{Zanon2018a,Faulwasser2018}.

\begin{Definition}[Strict dissipativity]
	System $x_{k+1}=f(x_k,u_k)$ is strictly dissipative with respect to the supply rate $\ell$ if there exists a bounded storage function $\Lambda(x)$ with $\Lambda(x_\mathrm{s})=0$, such that the following inequality is satisfied for all $(x_k,u_k)$ on the domain of the MPC problem~\eqref{eq:mpc}:%
	\begin{align}%
	\label{eq:strict_diss}%
	\Lambda(f(x_k,u_k)) - \Lambda(x_k) \leq -\rho(\|x_k\|) + \ell(x_k,u_k),
	\end{align}%
	where $\rho$ is a positive definite function.
\end{Definition}
Note that, if~\eqref{eq:strict_diss} holds, then~\eqref{eq:steady_state} must have a unique solution.
Given a storage function $\Lambda(x)$ with $\Lambda(x_\mathrm{s})=0$ we define the rotated stage and terminal cost as%
\begin{subequations}%
	\label{eq:rotated_cost}%
	\begin{align}%
	\bar \ell(x_k,u_k) &:= \ell(x_k,u_k) + \Lambda(x_k) - \Lambda(f(x_k,u_k)), \\
	\bar V_\mathrm{f}(x) &:= V_\mathrm{f}(x) + \Lambda(x).
	\end{align}%
\end{subequations}%
\begin{Assumption}
	\label{ass:terminal_stabilizing}
	There exist a compact set $\mathbb{X}_\mathrm{f}$ containing $x_\mathrm{s}$ in its interior and a terminal control law $\kappa_\mathrm{f}$ such that%
	\begin{align*}%
		V_\mathrm{f}(f(x,\kappa_\mathrm{f}(x))) &\leq V_\mathrm{f}(x) - \ell(x,\kappa_\mathrm{f}(x)), && &
		h(x,\kappa_\mathrm{f}(x)) &\geq0,
	\end{align*}%
	hold $\forall \ x\in\mathbb{X}_\mathrm{f}$. Note that this entails that $f(x,\kappa_\mathrm{f}(x)) \in \mathbb{X}_\mathrm{f}$.
\end{Assumption}

\begin{Theorem}[Stability~\cite{Amrit2011a}]
	\label{thm:empc_stability}
	Assume that strict dissipativity and Assumption~\ref{ass:terminal_stabilizing} hold; constraints $h$ define a compact set $\mathcal{Z}$; $f$, $h$, $\Lambda$ and $\ell$ are $C^2$ on $\mathcal{Z}$; and $V_\mathrm{f}$ is $C^2$ on $\mathbb{X}_\mathrm{f}$. Then the origin is an asymptotically stable equilibrium for the closed-loop system.
\end{Theorem}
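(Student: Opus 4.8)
The plan is to use the standard \emph{cost rotation} argument: I would transform the economic MPC problem~\eqref{eq:mpc} into an equivalent problem with a positive-definite stage cost, for which the optimal value function serves as a Lyapunov function. First I would observe that, along any feasible trajectory, the rotated stage cost~\eqref{eq:rotated_cost} telescopes: using the dynamics $x_{k+1}=f(x_k,u_k)$,
\begin{align*}
\sum_{k=0}^{N-1}\bar\ell(x_k,u_k)+\bar V_\mathrm{f}(x_N)
&=\sum_{k=0}^{N-1}\ell(x_k,u_k)+V_\mathrm{f}(x_N)+\Lambda(x_0).
\end{align*}
Since the initial-condition constraint~\eqref{eq:mpc_ic} fixes $x_0=\hat x_0$, the term $\Lambda(\hat x_0)$ is a constant independent of $w$, so the rotated OCP and the original OCP~\eqref{eq:mpc} share the \emph{same} minimizer $w^*$ and, in particular, the same feedback $\pi(\hat x_0)=u_0^*$. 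Hence it suffices to prove stability for the rotated problem.

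Next I would verify that the rotated problem meets the hypotheses of a standard stabilizing MPC scheme. Strict dissipativity~\eqref{eq:strict_diss} gives immediately $\bar\ell(x,u)=\ell(x,u)+\Lambda(x)-\Lambda(f(x,u))\geq\rho(\|x\|)$, so the rotated stage cost is positive definite and vanishes at the origin. For the terminal ingredients, substituting the definitions~\eqref{eq:rotated_cost} shows that the rotated decrease condition $\bar V_\mathrm{f}(f(x,\kappa_\mathrm{f}(x)))\leq\bar V_\mathrm{f}(x)-\bar\ell(x,\kappa_\mathrm{f}(x))$ is, after cancelling the $\Lambda$ terms, exactly the condition of Assumption~\ref{ass:terminal_stabilizing}; the feasibility and invariance properties of $\kappa_\mathrm{f}$ and $\mathbb{X}_\mathrm{f}$ carry over unchanged.

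With these two facts in place, I would run the textbook Lyapunov argument on the rotated optimal value function $\bar V_N(\hat x_0)$. A lower bound $\bar V_N(x)\geq\bar\ell(x,u_0^*)\geq\rho(\|x\|)$ follows from positive definiteness of $\bar\ell$, while an upper bound near the origin follows from the terminal cost and continuity on the compact set $\mathcal{Z}$, for which the $C^2$ regularity hypotheses guarantee the required continuity and boundedness. The descent step is the usual shifted-trajectory construction: appending $\kappa_\mathrm{f}$ to the shifted tail of $w^*$ yields a feasible candidate at the successor state, and the rotated terminal condition telescopes to give $\bar V_N(f(\hat x_0,\pi(\hat x_0)))\leq\bar V_N(\hat x_0)-\bar\ell(\hat x_0,\pi(\hat x_0))\leq\bar V_N(\hat x_0)-\rho(\|\hat x_0\|)$. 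Thus $\bar V_N$ is a Lyapunov function and the origin is asymptotically stable for the rotated closed loop, which by the first step coincides with the economic closed loop.

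The main obstacle is conceptual rather than computational: everything hinges on identifying the rotation~\eqref{eq:rotated_cost} that simultaneously leaves the minimizer invariant and renders the stage cost positive definite. Once the rotation is in place, the remaining steps are the standard MPC stability machinery; the only genuinely technical point is establishing the local upper bound on $\bar V_N$ and ensuring the constructed candidate trajectory stays feasible, both of which rely on the compactness of $\mathcal{Z}$, the terminal set properties from Assumption~\ref{ass:terminal_stabilizing}, and the $C^2$ smoothness hypotheses.
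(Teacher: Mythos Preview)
Your proposal is correct and follows exactly the standard cost-rotation argument from~\cite{Amrit2011a} that the paper cites; indeed, the paper does not give its own proof of this theorem but merely states it as a known result, and the telescoping identity you use is precisely the one the paper reproduces in the proof of Lemma~\ref{lem:rotation}.
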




Using the rotated cost~\eqref{eq:rotated_cost} the rotated MPC problem reads%
\begin{subequations}%
	\label{eq:rmpc}%
	\begin{align}%
	\min_{w} \ \ & \sum_{k=0}^{N-1} \bar \ell(x_k,u_k) + \bar V_\mathrm{f}(x_N) \\
	\mathrm{s.t.} \ \ & \eqref{eq:mpc_ic}, \eqref{eq:mpc_dyn}, \eqref{eq:mpc_pc}, \eqref{eq:mpc_tc},
	\end{align}
\end{subequations}
with Lagrangian $\mathcal{\bar L}(w,\lambda,\mu)$ defined analogously to the one of the original MPC problem.
Equivalence of the primal solutions of the original and rotated MPC problems has been established to prove Theorem~\ref{thm:empc_stability}. While solving the rotated problem is clearly appealing, to do so one must first compute a storage function satisfying~\eqref{eq:strict_diss}. Unfortunately, this is known to be very hard in the general case. Since it affects the SQP convergence properties, in this paper we are also interested in the dual solution. Therefore, we prove the following.
\begin{Lemma}
	\label{lem:rotation}
	The rotated MPC problem~\eqref{eq:rmpc} delivers the same primal solution as the original MPC problem~\eqref{eq:mpc}. The dual solution, however, is different in general and satisfies%
	\begin{align}%
	\label{eq:dual_rotation}%
	\bar \lambda_{k} = \lambda_{k} + \nabla \Lambda(x_k), && \bar \mu_{k} = \mu_{k}, && \bar \nu = \nu.
	\end{align}
\end{Lemma}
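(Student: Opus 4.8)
The plan is to prove the two claims in sequence: primal equivalence by a telescoping argument, and then the dual relation \eqref{eq:dual_rotation} directly from the stationarity (KKT) conditions of the two problems. The two problems \eqref{eq:mpc} and \eqref{eq:rmpc} share exactly the same constraints, so their feasible sets are identical and only the objectives differ; the whole content of the lemma is to track how this objective shift reshuffles the multipliers.

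First I would show that on the common feasible set the two objectives differ only by a constant. The dynamics \eqref{eq:mpc_dyn} give $\Lambda(f(w_k)) = \Lambda(x_{k+1})$, so substituting the rotated costs \eqref{eq:rotated_cost} the objective of \eqref{eq:rmpc} becomes $\sum_{k=0}^{N-1}\ell(w_k) + V_\mathrm{f}(x_N) + \sum_{k=0}^{N-1}\left[\Lambda(x_k) - \Lambda(x_{k+1})\right] + \Lambda(x_N)$. The telescoping sum collapses to $\Lambda(x_0) - \Lambda(x_N)$, which combines with the terminal $\Lambda(x_N)$ to leave the original objective plus $\Lambda(x_0) = \Lambda(\hat x_0)$, a constant by \eqref{eq:mpc_ic}. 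Since the feasible sets coincide, the minimizers coincide, which establishes the primal claim.

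Next I would address the dual. Because LICQ is assumed, the multipliers of each problem are unique, so it suffices to exhibit multipliers that satisfy the rotated KKT system and to match them to \eqref{eq:dual_rotation}. I would write the stationarity conditions of \eqref{eq:mpc} and \eqref{eq:rmpc} stagewise, using the rotated gradients $\nabla_x\bar\ell_k = \nabla_x\ell_k + \nabla\Lambda(x_k) - \nabla_x f_k\,\nabla\Lambda(f(w_k))$, $\nabla_u\bar\ell_k = \nabla_u\ell_k - \nabla_u f_k\,\nabla\Lambda(f(w_k))$, and $\nabla\bar V_\mathrm{f} = \nabla V_\mathrm{f} + \nabla\Lambda$. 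Substituting $\bar\lambda_k = \lambda_k + \nabla\Lambda(x_k)$, $\bar\mu_k = \mu_k$, $\bar\nu = \nu$ and evaluating at the common primal solution (so $f(w_k) = x_{k+1}$), the term $-\nabla f_k\,\nabla\Lambda(x_{k+1})$ from the rotated stage cost cancels exactly the $\nabla f_k\,\nabla\Lambda(x_{k+1})$ contributed by $\bar\lambda_{k+1}$, while the $+\nabla\Lambda(x_k)$ in $\nabla_x\bar\ell_k$ cancels the $\nabla\Lambda(x_k)$ inside $\bar\lambda_k$. Each rotated stationarity equation thus reduces identically to the corresponding condition of \eqref{eq:mpc}.

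The step I expect to require the most care is the boundary bookkeeping rather than the interior cancellation. The initial stage carries the extra $+\nabla\Lambda(x_0)$ from $\bar\ell_0$ with no preceding costate to absorb it; this is precisely the term taken up by the initial-condition multiplier, which forces $\bar\lambda_0 = \lambda_0 + \nabla\Lambda(x_0)$ and so confirms that \eqref{eq:dual_rotation} holds uniformly, including $k=0$. At the terminal stage one must check that the $\nabla\Lambda(x_N)$ produced by $\bar V_\mathrm{f}$ matches the $\nabla\Lambda(x_N)$ inside $\bar\lambda_N$, which it does by the same mechanism. Finally, since the primal solution and hence the active set are unchanged, dual feasibility $\bar\mu_k = \mu_k \ge 0$ and complementary slackness carry over verbatim, so by LICQ uniqueness these are exactly the rotated multipliers, completing the proof.
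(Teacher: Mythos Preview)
Your proof is correct and follows essentially the same route as the paper: a telescoping argument to show the objectives differ by the constant $\Lambda(\hat x_0)$ for the primal claim, followed by a stagewise verification that the candidate multipliers~\eqref{eq:dual_rotation} satisfy the rotated stationarity conditions (checking interior stages and the terminal stage separately). Your explicit attention to the $k=0$ boundary, to LICQ uniqueness, and to the remaining KKT conditions (complementarity, dual feasibility) is slightly more detailed than the paper's presentation but does not change the argument.
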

\begin{proof}
	The proof of the first claim follows along the lines of~\cite{Diehl2011,Amrit2011a} and is recalled here for the sake of completeness. 
	We expand the rotated cost as follows
	\begin{align*}
	\bar J(w)=&\sum_{k=0}^{N-1} \bar \ell(x_k,u_k) + \bar V_\mathrm{f}(x_N) \\
	=&\sum_{k=0}^{N-1} \ell(x_k,u_k) + V_\mathrm{f}(x_N) + \Lambda(x_0)\\
	=& \ J(w) + \Lambda(\hat x_0),
	\end{align*}
	such that 
	the original and rotated cost differ by the constant $\Lambda(\hat x_0)$.
	
	We now prove the second claim, which has not received much attention so far. For $k\in\mathbb{I}_0^{N-1}$ optimality implies
	\begin{align*}
	0=\nabla_{x_k} \mathcal{L} &= \nabla_{x_k} \ell_k - \lambda_{k} + \nabla_{x_k}f_k\lambda_{k+1} - \nabla_{x_k}h_k\mu_k,\\
	0=\nabla_{x_k} \mathcal{\bar L} &= \nabla_{x_k} \bar \ell_k - \bar \lambda_{k} + \nabla_{x_k}f_k\bar \lambda_{k+1} - \nabla_{x_k}h_k\bar \mu_k,
	\end{align*}
	such that,~\eqref{eq:dual_rotation} solves the equations above since
	\begin{align*}
	\nabla_{x_k} \bar \ell_k = \nabla_{x_k} \ell_k + \nabla_{x_k} \Lambda_{k} - \nabla_{x_k} f_k \nabla_{x_{k+1}} \Lambda_{k+1}.
	\end{align*}
	For $k=N$, we denote the Jacobian of the active terminal constraints as $G_\mathrm{f}$ to obtain
	\begin{align*}
	0=\nabla_{x_N} \mathcal{L} &= \nabla_{x_N} V_\mathrm{f}(x_N) - \lambda_{N} + G_\mathrm{f} \nu,\\
	0=\nabla_{x_N} \mathcal{\bar L} &= \nabla_{x_N} \bar V_\mathrm{f}(x_N) - \bar \lambda_{N} + G_\mathrm{f} \bar \nu,
	\end{align*}
	which is also satisfied by~\eqref{eq:dual_rotation}. The other optimality conditions do not depend on $\Lambda$ and, therefore, coincide for the original and rotated problem. 
\end{proof}
The fact that the dual solutions of the original and rotated problem differ is often neglected. However, since the Lagrangian Hessian depends on the dual variables, it is expected that NLP solvers do not take the same steps for the two problems. This fact will be proven in Lemma~\ref{lem:different_iterations}.

\subsection{Implications of Rotation on the SQP Iterates}

In this subsection we prove that, if the Hessian approximation does not depend on the Lagrange multipliers, SQP performs the same primal iterates on the original and rotated problem. Since the steps are then fully independent of the Lagrange multipliers, we only require that the primal initial guess is the same for the two problems for the condition to hold. Hessian approximations which depend on the Lagrange multipliers include (a) exact Hessian (eventually regularized to be positive-definite) and (b) BFGS updates~\cite{Nocedal2006}; but not Gauss-Newton approximations are typically used in tracking NMPC. The approximation that we will propose in Section~\ref{sec:hessian_approximation} does also not depend on the dual variables.

\begin{Lemma}
	\label{lem:iterations_coincide}
	Suppose to solve the original problem~\eqref{eq:mpc} and rotated problem~\eqref{eq:rmpc} using SQP based on a Hessian approximation $L^{(i)}$ which does not depend on the dual variables. Then, if the same primal initial guess is used, the primal steps taken on the two problems coincide.
\end{Lemma}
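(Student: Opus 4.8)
The plan is to proceed by induction on the SQP iteration index $i$, the inductive claim being that a common primal iterate $w^{(i)}$ on the two problems forces a common primal search direction $w^{\mathrm{QP}_i}$, and hence a common next iterate $w^{(i+1)}$ for the common step length used in both runs (e.g.\ the unit step of the RTI scheme). The base case is the shared initial guess $w^{(0)}$. For the inductive step I would compare the two QP subproblems generated at the common point $w^{(i)}$ and make three structural observations. First, the rotation~\eqref{eq:rotated_cost} modifies only the objective, so the initial-condition constraint, the dynamics~\eqref{eq:mpc_dyn}, the path and terminal constraints, and all their linearizations are \emph{identical} for~\eqref{eq:mpc} and~\eqref{eq:rmpc}. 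Second, by hypothesis the Hessian approximation $L^{(i)}$ does not depend on the dual variables and is evaluated at the same primal point, so it coincides as well; together with the maintained null-space positive-definiteness this makes each QP primal minimizer unique. Third, the two QPs therefore differ \emph{only} in the linear term of their cost, through $\nabla\bar J(w^{(i)})$ versus $\nabla J(w^{(i)})$.

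The next step is to reduce the claim to a range condition. Two QPs with identical constraints and Hessian and a positive-definite reduced Hessian share the same primal minimizer exactly when the difference of their cost gradients is orthogonal to the tangent space of the linearized feasible set, i.e.\ when it lies in the range of the constraint Jacobian $\nabla g(w^{(i)})$; equivalently, when the objective difference is constant along every direction satisfying the linearized equality constraints. Writing $\Phi(w):=\bar J(w)-J(w)=\sum_{k=0}^{N-1}\bigl(\Lambda(x_k)-\Lambda(f(x_k,u_k))\bigr)+\Lambda(x_N)$, the task becomes to show $\nabla\Phi(w^{(i)})\in\mathrm{range}\,\nabla g(w^{(i)})$. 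If this holds, then setting the same primal step and shifting only the equality multipliers makes the two QP stationarity systems coincide in the primal variable, while the inequality and terminal multipliers (and hence the active set and complementarity) are left untouched; this is the QP-level analogue of Lemma~\ref{lem:rotation}.

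To establish the range condition I would exhibit the explicit shift that realizes it, mirroring the dual relation~\eqref{eq:dual_rotation}: take $\delta\lambda_k=\nabla\Lambda(x_k^{(i)})$ for the initial-condition and dynamics multipliers. Contracting $\nabla g(w^{(i)})\,\delta\lambda$ against an arbitrary direction $d$ in the null space of the linearized dynamics, substituting the linearized state-update relation $d_{x_{k+1}}=\nabla_x f_k^\top d_{x_k}+\nabla_u f_k^\top d_{u_k}$, and telescoping the resulting sum, one reproduces $\nabla\Phi(w^{(i)})^\top d$ up to boundary terms of the form $\bigl(\nabla\Lambda(x_k^{(i)})-\nabla\Lambda(f(x_{k-1}^{(i)},u_{k-1}^{(i)}))\bigr)^\top d_{x_k}$ (the $k=0$ contribution dropping out because $d_{x_0}=0$). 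Carrying out this telescoping cleanly is the main obstacle, and it is exactly where the structure of~\eqref{eq:rotated_cost} enters: $\bar J-J$ collapses to the constant $\Lambda(\hat x_0)$ along trajectories satisfying the dynamics, so the residual boundary terms vanish and the range condition closes precisely when the iterate is dynamically feasible---automatic in a condensed or single-shooting realization, and maintained along feasible iterates otherwise. With the range condition secured, $w^{\mathrm{QP}_i}$ is common to both QPs, hence so is $w^{(i+1)}$, and the induction is complete.
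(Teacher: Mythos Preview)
Your approach---reducing the comparison of the two QP subproblems to a range-space condition on $\nabla\Phi(w^{(i)})$ and then exhibiting the explicit multiplier shift $\delta\lambda_k=\nabla\Lambda(x_k^{(i)})$---is exactly the paper's argument: the paper performs the same stage-wise telescoping directly on the QP cost difference and records the resulting dual relation as $\bar\lambda_k^{\mathrm{QP}_i}=\lambda_k^{\mathrm{QP}_i}+\nabla\Lambda_k^{(i)}$, which is your $\delta\lambda_k$.

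You are right to flag the dynamic-feasibility caveat, and it is worth noting that the paper's proof uses it tacitly. The telescoping cancels $\nabla\Lambda\!\left(f(w_k^{(i)})\right)$ against $\nabla\Lambda\!\left(x_{k+1}^{(i)}\right)$; the paper writes both as ``$\nabla\Lambda_{k+1}$'', which is only legitimate when $x_{k+1}^{(i)}=f(w_k^{(i)})$. Without that, the residual terms $\bigl[\nabla\Lambda(x_k^{(i)})-\nabla\Lambda(f(w_{k-1}^{(i)}))\bigr]^\top\Delta x_k$ survive and the range condition can genuinely fail in a multiple-shooting realization (a one-step scalar example with $\Lambda(x)=\tfrac12x^2$ already shows it). So your parenthetical---``automatic in a condensed or single-shooting realization, and maintained along feasible iterates otherwise''---is not a stylistic hedge but the actual hypothesis under which both your argument and the paper's go through; be aware that in multiple shooting a full SQP step from a feasible point does \emph{not} in general land on a feasible point, so ``maintained along feasible iterates'' does not bootstrap the induction beyond the first step without further assumptions.
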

\begin{proof}
	The proof follows along similar lines as Lemma~\ref{lem:rotation}: we prove that the first primal iterate of the two problems coincides and, since $L^{(i)}$ is independent of the dual variables, also subsequent primal steps must coincide. 
	We observe that the difference in cost between the two subproblems is only due to a difference in the gradient term at each time step $k$. The gradient difference
	\begin{align*}
		\nabla \left (\Lambda(x_k)- \Lambda(f(w_k))\right ) &= \matr{c}{I\\0}\nabla \Lambda_k - \nabla f_k \nabla \Lambda_{k+1}
	\end{align*}
	results in the QP subproblem cost difference
	\begin{align*}
		\nabla \Lambda_k^\top x_k - \nabla \Lambda_{k+1}^\top (\nabla_{x} f_k^\top x_k + \nabla_{u} f_k^\top u_k).
	\end{align*}
	An analogous consideration for the final time step $N$ applies, such that the cost difference is given by a telescopic sum. The only surviving term is $\nabla \Lambda_0^\top x_0$ which is constant. In accordance with Lemma~\ref{lem:rotation} and Equation~\eqref{eq:dual_rotation}, we obtain that the difference in the QP Lagrange multipliers at iterate $i$ is $\bar \lambda_{k}^{\mathrm{QP}_i} = \lambda_{k}^{\mathrm{QP}_i} + \nabla \Lambda_k^{(i)}$.
\end{proof}
If exact Hessian is used, one also needs to provide a consistent guess for the dual variables. 
However, even with a consistent initial guess, only the first QP subproblem is guaranteed to provide the same primal solution for the original and rotated problem. Note that, by Lemma~\ref{lem:rotation} though the iterates differ, the solutions of the two problems coincide.

\begin{Lemma}
	\label{lem:different_iterations}
	Suppose to solve the original problem~\eqref{eq:mpc} and rotated problem~\eqref{eq:rmpc} using SQP with exact Hessian. Then, if the same primal initial guess is used and the Lagrange multipliers are initialized as $\mu_k=\mu_\mathrm{s}=0$, $\nu=0$ and $\boldsymbol{\lambda}_s = (\lambda_\mathrm{s},\ldots,\lambda_\mathrm{s})$ for the original problem and $\boldsymbol{\bar \lambda}_s = 0$ for the rotated problem, the first primal iterate coincides, but the subsequent ones do not.
\end{Lemma}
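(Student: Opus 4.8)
The plan is to compare the two sequences of QP subproblems stage by stage. Since rotation changes only the cost and leaves the constraints of~\eqref{eq:mpc} and~\eqref{eq:rmpc} untouched, the two first QPs share the same feasible set and differ only in the gradient and Hessian of the objective. I would first dispose of the gradient: the telescoping computation in the proof of Lemma~\ref{lem:iterations_coincide} shows that the difference of the two QP gradients is constant along every feasible direction, i.e. $Z^\top(\nabla \bar J^{(0)}-\nabla J^{(0)})=0$, with $Z$ a basis of the null space of the active-constraint Jacobian. Hence the first primal steps coincide if and only if the exact-Hessian difference also annihilates the feasible directions at the optimal step, i.e. $Z^\top(\bar L^{(0)}-L^{(0)})\,w^{\mathrm{QP}_0}=0$.

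Next I would compute the blockwise Lagrangian-Hessian difference $D_k:=\bar L^{(0)}_k-L^{(0)}_k$ at the common primal guess. Using $\bar\ell_k=\ell_k+\Lambda(x_k)-\Lambda(f(w_k))$ and $\nabla^2[\Lambda\circ f]=\nabla f_k\,\nabla^2\Lambda(f(w_k))\,\nabla f_k^\top+\sum_i[\nabla\Lambda(f(w_k))]_i\,\nabla^2 f_{k,i}$, together with the stated initialization $\lambda^{(0)}=\lambda_\mathrm{s}$, $\bar\lambda^{(0)}=0$, one obtains the symmetric block $D_k=\nabla^2\Lambda_k-\nabla f_k\,\nabla^2\Lambda(f(w_k))\,\nabla f_k^\top-\sum_i(\lambda_\mathrm{s}+\nabla\Lambda(f(w_k^{(0)})))_i\,\nabla^2 f_{k,i}$, with $\nabla^2\Lambda_k$ understood as acting on the state components. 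The essential observation is that the dynamics-curvature terms carry the coefficient $\lambda_\mathrm{s}+\nabla\Lambda(f(w_k^{(0)}))$, which vanishes exactly when the initialization is consistent with the rotation of Lemma~\ref{lem:rotation}: for the stated cold start this requires $\nabla\Lambda(x_k^{(0)})=-\lambda_\mathrm{s}$, i.e. the storage-function identity $\nabla\Lambda(x_\mathrm{s})=-\lambda_\mathrm{s}$ evaluated along the steady-state guess. With these terms gone, $D_k$ reduces to its $\nabla^2\Lambda$ part, which telescopes across stages: using $\nabla f_k^\top z_{w_k}=z_{x,k+1}$ for feasible directions and the terminal block $\nabla^2\bar V_\mathrm{f}-\nabla^2 V_\mathrm{f}=\nabla^2\Lambda_N$, the stagewise contributions cancel pairwise and leave only a term proportional to $z_{x,0}$, which is zero because the fixed initial state forces $z_{x,0}=0$. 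This yields $Z^\top(\bar L^{(0)}-L^{(0)})\,w^{\mathrm{QP}_0}=0$ and hence the first primal iterate coincides; solving the residual stationarity then gives the QP multipliers $\bar\lambda^{\mathrm{QP}_0}_k=\lambda^{\mathrm{QP}_0}_k+\nabla\Lambda(x_k^{(0)})$, mirroring Lemma~\ref{lem:rotation}.

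For the second claim I would track how the dual update breaks this consistency. After the common first step the primal iterate $w^{(1)}$ is shared, but the multipliers become $\lambda^{(1)}=\lambda^{\mathrm{QP}_0}$ and $\bar\lambda^{(1)}=\bar\lambda^{\mathrm{QP}_0}=\lambda^{(1)}+\nabla\Lambda(x_k^{(0)})$, so the dual offset between the two problems is frozen at $\nabla\Lambda(x_k^{(0)})$: it is aligned with the \emph{previous} point $w^{(0)}$, not with $w^{(1)}$. Recomputing $D_k$ at iterate~$1$, neither the cancellation of the $\nabla^2 f_{k,i}$ terms nor the telescoping of the $\nabla^2\Lambda$ part goes through, because both now require the stale offset $\nabla\Lambda(x_k^{(0)})$ to match $\nabla\Lambda(f(w_k^{(1)}))$ at the moved linearization point; this holds only if $w^{(1)}=w^{(0)}$, i.e. if the first step were null. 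Whenever the first step is nonzero these curvature and evaluation-point discrepancies survive at order $\|w^{\mathrm{QP}_0}\|$, so $Z^\top(\bar L^{(1)}-L^{(1)})\,w^{\mathrm{QP}_1}\neq 0$ in general and the second primal steps differ.

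The main obstacle I anticipate is the Hessian bookkeeping: one must evaluate $\nabla^2[\Lambda\circ f]$ carefully, keep straight that $\Lambda$ is evaluated at $f(w_k^{(0)})$ rather than at $x_{k+1}^{(0)}$ (these agree only for a dynamically feasible guess, which the steady-state cold start provides but $w^{(1)}$ does not), and verify that the telescoping of the $\nabla^2\Lambda$ terms closes exactly against the terminal block so that only the $z_{x,0}$ term remains. Establishing that the single coefficient $\lambda_\mathrm{s}+\nabla\Lambda(f(w_k^{(0)}))$ is precisely what enables coincidence at iterate~$0$ and is lost at iterate~$1$ is the crux of the argument.
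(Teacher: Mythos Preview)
Your overall strategy mirrors the paper's: compute the stagewise Lagrangian--Hessian difference, show the $\langle\nabla^2 f_k,\cdot\rangle$ terms cancel at iterate~$0$ because the dual initialization satisfies $\bar\lambda_k^{(0)}-\lambda_k^{(0)}=\nabla\Lambda(x_k^{(0)})$, and then observe that the residual $\nabla^2\Lambda$ blocks telescope on the feasible subspace. That part is correct and is exactly how the paper argues the first primal step coincides.

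The gap is in your computation of the QP multipliers after the first step. You assert $\bar\lambda_k^{\mathrm{QP}_0}=\lambda_k^{\mathrm{QP}_0}+\nabla\Lambda(x_k^{(0)})$, but this ignores that the Hessian difference still acts on the full space in the QP stationarity condition: the $\nabla^2\Lambda$ blocks vanish only after projection onto the null space $Z$, and in the unreduced KKT system they multiply $x_k^{\mathrm{QP}_0}$ and must be absorbed by the multipliers. Carrying out the stationarity computation stage by stage, the paper obtains
\[
\bar\lambda_k^{\mathrm{QP}_0}-\lambda_k^{\mathrm{QP}_0}
=\nabla^2\Lambda_k^{(0)}\,x_k^{\mathrm{QP}_0}+\nabla\Lambda_k^{(0)}
=\nabla\Lambda\!\left(x_k^{(1)}\right)+O\!\left(\|x_k^{\mathrm{QP}_0}\|^2\right),
\]
so the dual offset is a first-order Taylor update of $\nabla\Lambda$ towards the \emph{new} point, not frozen at $x_k^{(0)}$.

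This changes the mechanism (and the order) in your second claim. The mismatch that spoils the $\nabla^2 f$ cancellation at iterate~$1$ is
\[
\bar\lambda_k^{(1)}-\lambda_k^{(1)}-\nabla\Lambda_k^{(1)}
=O\!\left(\|x_k^{(1)}-x_k^{(0)}\|^2\right),
\]
i.e.\ a second-order remainder, not the first-order ``stale offset'' $\nabla\Lambda(x_k^{(0)})-\nabla\Lambda(x_k^{(1)})$ you invoke. The conclusion that the second primal step differs still holds, but the reason is the quadratic truncation error in linearizing $\nabla\Lambda$, not a frozen dual. Your side remark that $w^{(1)}$ is generically not dynamically feasible (so $\Lambda(f(w_k^{(1)}))\neq\Lambda(x_{k+1}^{(1)})$) is a legitimate secondary source of discrepancy, but it is not the mechanism the paper isolates; the paper pins the failure on the surviving $\langle\nabla^2 f_k^{(1)},\,\bar\lambda_k^{(1)}-\lambda_k^{(1)}-\nabla\Lambda_k^{(1)}\rangle$ term alone.
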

\begin{proof}
	The Hessian of the Lagrangian of the original and rotated problem are respectively
	\begin{align*}
	\nabla^2_{kk} \mathcal{L} &= \nabla^2 \ell_k + \left \langle \nabla^2 f_k, \lambda_k \right \rangle - \left \langle \nabla^2 h_k, \mu_k \right \rangle,  \\
	\nabla^2_{kk} \mathcal{\bar L} &= \nabla^2 \ell_k  + \left \langle \nabla^2 f_k, \bar \lambda_k \right \rangle - \left \langle \nabla^2 f_k, \nabla \Lambda_k \right \rangle  - \left \langle \nabla^2 h_k, \mu_k \right \rangle \\
	& \hspace{4em} + \nabla^2 \Lambda_k -\nabla f_k\nabla^2 \Lambda_{k+1} \nabla f_k^\top.
	\end{align*}
	By~\cite[Theorem 3]{Faulwasser2018} we have $\bar \lambda_k^{(0)} = \lambda_k^{(0)}+\nabla \Lambda_k^{(0)} $. Then,
	\begin{align*}
	\nabla^2 \mathcal{\bar L}_{kk}^{(0)} = \nabla^2 \mathcal{L}_{kk}^{(0)} 
	&+ \nabla^2 \Lambda_k^{(0)} - \nabla f_k^{(0)}\nabla^2 \Lambda_{k+1}^{(0)} \nabla {f_k^{(0)}}^\top, 
	\end{align*}
	where by the double index $kk$ we denote the Hessian block on the diagonal corresponding to stage $k$. 
	Therefore, the Hessian of the Lagrangian of the original and rotated problem only differ by the terms involving the Hessian of the storage functions, while the terms involving the Hessian of the system dynamics coincide. The KKT conditions of the original and rotated QP subproblems imply
	\begin{align*}
	\nabla_{w_k}\mathcal{\bar L}^{(0)} &= \nabla \bar \ell_k^{(0)} -\matr{c}{\bar \lambda_{k}^{\mathrm{QP}_0}\\0} + \matr{c}{A_k^{(0)} \\ B_k^{(0)}}^\top \hspace{-4pt} \bar \lambda_{k+1}^{\mathrm{QP}_0} - \nabla h_k^{(0)}\mu_k^{\mathrm{QP}_0}\\
	&=\nabla_{w_k} \mathcal{L}^{(0)} -\matr{c}{\Delta \lambda_{k}^{\mathrm{QP}_0}\\0} + \matr{c}{A_k^{(0)} \\ B_k^{(0)}}^\top \hspace{-4pt} \Delta \lambda_{k+1}^{\mathrm{QP}_0} \\
	&\hspace{1.6em} +\matr{c}{\nabla^2 \Lambda_{k}^{(0)} x_k^{\mathrm{QP}_0} \\0} - \matr{c}{A_k^{(0)} \\ B_k^{(0)}}^\top \hspace{-4pt} \nabla^2 \Lambda_{k+1}^{(0)} x_{k+1}^{\mathrm{QP}_0} \\
	&\hspace{1.6em} +\matr{c}{\nabla \Lambda_{k}^{(0)} \\0}- \matr{c}{A_k^{(0)} \\ B_k^{(0)}}^\top \hspace{-4pt} \nabla \Lambda_{k+1}^{(0)}, 
	\end{align*}
	such that $\bar \lambda_k^{\mathrm{QP}_0} = \lambda_k^{\mathrm{QP}_0} + \Delta \lambda_k^{\mathrm{QP}_0}$ solves the KKT conditions of the rotated problem, with 
	\begin{align*}
	\Delta \lambda_{k}^{\mathrm{QP}_0} &= \nabla^2 \Lambda_k^{(0)} x_k^{\mathrm{QP}_0} + \nabla \Lambda_k^{(0)} \\
	&= \nabla \Lambda\left (x_k^{(0)} + x_k^{\mathrm{QP}_0}\right ) + O\left (\left \|x_k^{\mathrm{QP}_0}\right \|^2\right ).
	\end{align*}
	Therefore, apart from the trivial case $x_k^{\mathrm{QP}_0}=0$, at the second SQP step we have
	\begin{align*}
	\nabla^2_{kk} \mathcal{\bar L}^{(1)} = \ & \nabla^2_{kk} \mathcal{L}^{(1)} + \left \langle \nabla^2 f^{(1)}_k, \bar \lambda_k^{(1)} - \nabla \Lambda_k^{(1)} - \lambda_k^{(1)} \right \rangle \\
	& \hspace{5em} + \nabla^2 \Lambda_k^{(1)} - \nabla f_k^{(1)}\nabla^2 \Lambda_{k+1}^{(1)} \nabla {f_k^{(1)}}^\top.
	\end{align*}
	Since
	\begin{align*}
	\bar \lambda_k^{(1)} - \nabla \Lambda_k^{(1)} - \lambda_k^{(1)} = O\left (\left \|x_k^{(1)}-x_k^{(0)}\right \|^2\right ) \neq 0,
	\end{align*}
	by using exact Hessian, the cost functions of the two QP subproblems differ by the term
	\begin{align*}
	\sum_{k=0}^{N-1} \frac{1}{2} {x_k^{\mathrm{QP}_1}}^\top \left \langle \nabla^2 f^{(1)}_k, \bar \lambda_k^{(1)} - \nabla \Lambda_k^{(1)} - \lambda_k^{(1)} \right \rangle x_k^{\mathrm{QP}_1},
	\end{align*}
	such that the primal solutions of the QP subproblems relative to the original and rotated problems do not coincide.
\end{proof}

This lemma is of paramount importance, since it establishes that, while we can use the rotated MPC problem to study the stability of fully converged EMPC, when the exact Hessian is used the same does not apply to partially converged schemes such as, e.g., the RTI scheme, since the iterates of the two problems do not coincide. 
This fact will impact in particular the developments of Section~\ref{sec:rti}.

\section{The EMPC Hessian Approximation}
\label{sec:hessian_approximation}

In this section we propose a Hessian approximation for economic MPC based on 
the Hessian of the LETEMPC formulation~\cite{Zanon2016b}, which coincides with 
the Hessian of the rotated economic MPC at the optimal steady state. At the optimal steady state, the reduced Hessian of the two coincides with that of the economic MPC formulation. 

For the MPC problem~\eqref{eq:mpc} the Hessian evaluated at the optimal steady-state is~\cite{Zanon2016b}:
\begin{align*}
\boldsymbol{H} &= \mathrm{blkdiag}(H,\ldots,H,H_\mathrm{f}), \\
H&:=\nabla_{ww}^2 \mathcal{L}_\mathrm{s}(w_\mathrm{s},\lambda_\mathrm{s},\mu_\mathrm{s}), && H_\mathrm{f} := \nabla^2 V_\mathrm{f}(x_\mathrm{s}).
\end{align*}
Since in general $\boldsymbol{H}$ is not positive-definite, a strategy has been proposed in~\cite{Zanon2016b,DeSchutter2020} to eliminate the directions of negative curvature by solving the convex semidefinite program (SDP):%
\begin{subequations}%
	\label{eq:sdp_pos_def_constraints}%
	\begin{align}%
	\min_{\delta P, F, G, \alpha,\beta} \ \ & \beta + \rho_1 \|F\|  + \rho_2 \|G\| \\
	\mathrm{s.t.} \ \ 
	& \alpha H+\mathcal H(\delta P) + \eta C_{\mathbb{A}_\mathrm{s}}^\top FC_{\mathbb{A}_\mathrm{s}} \succeq  I, \label{eq:sdp_pos_def_constraints:pd} \\
	& \alpha H+\mathcal H(\delta P) + \eta C_{\mathbb{A}_\mathrm{s}}^\top FC_{\mathbb{A}_\mathrm{s}} \preceq \beta I, \label{eq:sdp_pos_def_constraints:maxeig} \\
	& \alpha H_\mathrm{f} - \delta P  - \eta D_{\mathbb{A}^\mathrm{f}_\mathrm{s}}^\top G D_{\mathbb{A}^\mathrm{f}_\mathrm{s}}  \succeq I, \label{eq:sdp_pos_def_constraints_terminal_pd}\\
	& \alpha H_\mathrm{f} - \delta P - \eta D_{\mathbb{A}^\mathrm{f}_\mathrm{s}}^\top G D_{\mathbb{A}^\mathrm{f}_\mathrm{s}} \preceq \beta I, \label{eq:sdp_pos_def_constraints_terminal_maxeig}
	\end{align}
\end{subequations}
with user-defined parameters $\rho_1,\rho_2\geq 0$, $\eta\in\{0,1\}$ and
\begin{align}
	\label{eq:hess_rot}
	\mathcal H(\delta P) := \matr{cc}{ A^\top \delta P A - \delta P & A^\top \delta P B \\ B^\top \delta P A & B^\top \delta P B }, 
\end{align}
with
\begin{subequations}
	\label{eq:quadr_expansion}
	\begin{align}
	& A:=\nabla_{x}f(x_\mathrm{s},u_\mathrm{s})^\top, 
	&& B:=\nabla_{u}f(x_\mathrm{s},u_\mathrm{s})^\top, \\ 
	& C:=\nabla_{(x,u)}h(x_\mathrm{s},u_\mathrm{s})^\top, 
	&&D:=\nabla_{x} g_\mathrm{f}(x_\mathrm{s})^\top,
	\end{align}
\end{subequations}
and $\mathbb{A}_\mathrm{s}$, $\mathbb{A}^\mathrm{f}_\mathrm{s}$ the sets of strictly active constraints at steady state: 
\begin{align*}
	\mathbb{A}_\mathrm{s} &:= \{  i  |  h_i(x_\mathrm{s},u_\mathrm{s})=0, \, \mu_i >0    \}, \ \
	\mathbb{A}^\mathrm{f}_\mathrm{s} := \{  i  |  g_i(x_\mathrm{s})=0, \, \nu_i >0    \}.
\end{align*}
While other variations of the formulation can be derived, in Problem~\eqref{eq:sdp_pos_def_constraints}  it is essential to satisfy~\eqref{eq:sdp_pos_def_constraints:pd} with $\alpha >0$. All other constraints are introduced to compute well-conditioned matrices, thus reducing possible inaccuracies in the solution of the MPC problem. A user-friendly tool implementing~\eqref{eq:sdp_pos_def_constraints} is provided in~\cite{DeSchutter2020}. 

We define the positive-definite Hessian approximation as
\begin{align}
\label{eq:pos_def_hess_approx}
\boldsymbol{M}&:=\mathrm{blkdiag}(M,\ldots,M,M_\mathrm{f}) \succ0,\\
	M&:=H+\mathcal H(\delta P)+ \eta C_{\mathbb{A}_\mathrm{s}}^\top FC_{\mathbb{A}_\mathrm{s}} \succ0, \nonumber \\
	M_\mathrm{f}& :=H_\mathrm{f} -\delta P - \eta D_{\mathbb{A}^\mathrm{f}_\mathrm{s}}^\top G D_{\mathbb{A}^\mathrm{f}_\mathrm{s}}\succ0, \nonumber
\end{align}
where we stress that SDP~\eqref{eq:sdp_pos_def_constraints} is solved offline once and $\boldsymbol{M}$ is fixed.

In the following, we denote as Linear-Quadratic (LQ) system, the linear system~\eqref{eq:quadr_expansion} equipped with the quadratic expansion of the cost evaluated at steady-state. Consistently, we denote LQ MPC problem the corresponding MPC problem.

\subsection{Existence and Properties of the Hessian Approximation}
In the following, we first analyze the (mild) conditions under which the proposed Hessian approximation exists. Then, we prove that the convexification of the Hessian does not alter the reduced Hessian at the optimal steady state, suggesting that good convergence properties will be preserved for initial states close to the optimal steady-state.

\begin{Lemma}
	Assume that the LQ MPC problem is stabilizing for all horizons $N$ and satisfies LICQ. Assume additionally that the terminal cost is selected such that Assumpation~\ref{ass:terminal_stabilizing} holds for the LQ system, such that $H_\mathrm{f}$ solves the Lyapunov equation. 
	Then, Problem~\eqref{eq:sdp_pos_def_constraints} does have a solution with $\alpha>0$.
\end{Lemma}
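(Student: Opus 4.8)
The plan is to prove feasibility of~\eqref{eq:sdp_pos_def_constraints} by exhibiting an explicit feasible point with $\alpha>0$, reducing the problem to the existence of a suitable quadratic storage function. First I would argue that the upper-bound constraints~\eqref{eq:sdp_pos_def_constraints:maxeig} and~\eqref{eq:sdp_pos_def_constraints_terminal_maxeig}, together with the well-conditioning objective, are never the obstruction: once $\alpha,\delta P,F,G$ are fixed, the two matrices appearing in the SDP are fixed finite symmetric matrices, so $\beta$ can be chosen as (any upper bound on) their largest eigenvalue and the objective is finite. Likewise, since $\mathcal H(\cdot)$ and the penalty terms $C_{\mathbb{A}_\mathrm{s}}^\top(\cdot)C_{\mathbb{A}_\mathrm{s}}$ and $D_{\mathbb{A}^\mathrm{f}_\mathrm{s}}^\top(\cdot)D_{\mathbb{A}^\mathrm{f}_\mathrm{s}}$ are linear in their arguments, the two lower-bound constraints are positively homogeneous: if I can produce $\alpha_0>0$, $\delta P_0$, $F_0$, $G_0$ that make the left-hand sides of~\eqref{eq:sdp_pos_def_constraints:pd} and~\eqref{eq:sdp_pos_def_constraints_terminal_pd} strictly positive definite (without the $I$ margin), then scaling the whole tuple by $1/m$, with $m>0$ the smaller of the two minimal eigenvalues, yields the $\succeq I$ bounds while preserving $\alpha=\alpha_0/m>0$. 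Hence it suffices to find $\alpha_0>0$ (take $\alpha_0=1$) and $\delta P,F,G$ rendering both rotated Hessians strictly positive definite.

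For the stage block the key object is the reduced Hessian on $Z$, the null space of $C_{\mathbb{A}_\mathrm{s}}$. I would first establish the existence of a matrix $\delta P$ such that $Z^\top(H+\mathcal H(\delta P))Z\succ0$. Recognizing that $-\tfrac12 x^\top\delta P x$ plays the role of a quadratic storage function and that $\mathcal H(\delta P)$ is exactly the Hessian contribution of the rotation~\eqref{eq:rotated_cost} for the LQ system, this condition is precisely strict dissipativity of the LQ system, restricted to the feasible directions, with a quadratic storage. This is where the hypothesis that the LQ MPC problem is \emph{stabilizing for all horizons} enters: for a linear system with (possibly indefinite) quadratic stage cost, optimal steady-state operation together with the stabilizability encoded by stabilizing MPC yields strict dissipativity with a quadratic storage function, which can be constructed from the stabilizing solution of the associated Riccati equation. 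This step — converting the stabilizing property into the existence of the storage matrix $\delta P$ — is the crux of the proof and the place where the LQ structure is indispensable.

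Once $\delta P$ is fixed so that the reduced stage Hessian is positive definite, I would invoke LICQ to pass from the reduced to the full block. Since LICQ makes $C_{\mathbb{A}_\mathrm{s}}$ full row rank, a standard argument shows that adding $\eta\,C_{\mathbb{A}_\mathrm{s}}^\top F C_{\mathbb{A}_\mathrm{s}}$ with $F=\gamma I$ and $\gamma$ large enough renders $H+\mathcal H(\delta P)+\eta C_{\mathbb{A}_\mathrm{s}}^\top F C_{\mathbb{A}_\mathrm{s}}\succ0$ on the whole space, the penalty supplying the missing curvature transversal to $Z$ (when there are no active constraints the reduced and full problems coincide and $\eta$ is immaterial). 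The terminal block is handled analogously: Assumption~\ref{ass:terminal_stabilizing} for the LQ system, with $H_\mathrm{f}$ solving the Lyapunov equation, guarantees that the rotated terminal cost $\bar V_\mathrm{f}$ has Hessian $H_\mathrm{f}-\delta P\succ0$ on the null space of $D_{\mathbb{A}^\mathrm{f}_\mathrm{s}}$ for the same $\delta P$, and a penalty $G=\gamma_\mathrm{f} I$ then yields~\eqref{eq:sdp_pos_def_constraints_terminal_pd}. Combining the two strictly positive-definite blocks with the scaling and the choice of $\beta$ from the first paragraph produces a feasible point with $\alpha>0$.

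The main obstacle I anticipate is the dissipativity step of the second paragraph: turning ``stabilizing for all horizons'' into the existence of the quadratic storage matrix $\delta P$ making the reduced stage Hessian strictly positive definite, and in particular obtaining strictness. The completion-of-squares storage built from the optimal cost-to-go only yields a positive \emph{semi}definite rotated stage cost, so additional detectability/observability — implicit in stabilization for every horizon — is needed to exclude a degenerate input direction and guarantee $\succ0$ rather than $\succeq0$. A secondary subtlety is ensuring that one and the same $\delta P$ serves both the stage and the terminal constraints, which is exactly what the Lyapunov-equation hypothesis on $H_\mathrm{f}$ is designed to provide.
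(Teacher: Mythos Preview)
Your overall strategy --- obtain $\delta P$ from strict dissipativity of the LQ problem (this is what ``stabilizing for all horizons'' buys), use LICQ together with the $F$ penalty to lift the stage reduced Hessian to the full space, invoke the Lyapunov hypothesis for the terminal block, and finally scale to recover the $\succeq I$ margins --- is the same skeleton as the paper's proof, which outsources the stage step to \cite{Zanon2016b} and then treats the terminal block.

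The terminal block, however, is where your sketch departs from the paper and where it has a genuine gap. First a sign issue: in~\eqref{eq:sdp_pos_def_constraints_terminal_pd} the penalty enters as $-\eta\,D_{\mathbb{A}^\mathrm{f}_\mathrm{s}}^\top G D_{\mathbb{A}^\mathrm{f}_\mathrm{s}}$, so taking $G=\gamma_\mathrm{f} I$ with $\gamma_\mathrm{f}>0$ \emph{removes} curvature rather than adding it; your ``fill in transversally'' argument would need $G\preceq 0$. More importantly, your claim that $H_\mathrm{f}-\delta P\succ0$ on $\ker D_{\mathbb{A}^\mathrm{f}_\mathrm{s}}$ ``for the same $\delta P$'' is not justified once stage constraints are active: in that regime only $M=H+\mathcal H(\delta P)+\eta C_{\mathbb{A}_\mathrm{s}}^\top F C_{\mathbb{A}_\mathrm{s}}$ is positive definite, not $H+\mathcal H(\delta P)$, so the Lyapunov equation in the pair $(H_\mathrm{f}-\delta P,\,H+\mathcal H(\delta P))$ no longer has a definite right-hand side and you cannot conclude $H_\mathrm{f}-\delta P\succ0$.

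The paper avoids both issues by a different mechanism. It uses the algebraic identity $\mathcal H(M_\mathrm{f})+M=\mathcal H(H_\mathrm{f})+H$ (plus correction terms in $F,G$), which says that the Lyapunov inequality of Assumption~\ref{ass:terminal_stabilizing} is \emph{invariant under the rotation by $\delta P$}; since $M\succ0$, the Lyapunov equation then forces $M_\mathrm{f}\succ0$ directly, with no separate null-space argument. When constraints are active the paper does not pick $G$ by a generic penalty argument but exploits the positive-invariance structure of $\mathbb{X}_\mathrm{f}$: the strictly active terminal constraints necessarily contain the strictly active path constraints under $\kappa_\mathrm{f}$ (cf.~\eqref{eq:tc_active_include_path}), and this lets one choose $G\succeq0$ so that the extra $F,G$ terms still respect the Lyapunov inequality. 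That structural choice of $G$ --- rather than a large-$\gamma$ penalty --- is the missing ingredient in your sketch.
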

\begin{proof}
%
	The proof, in case of terminal point constraint, is given in~\cite[Theorem 9]{Zanon2016b}. 
	The case of no active constraints at steady-state with a terminal point constraint is covered in~\cite{Zanon2014d} and~\cite[Theorem 7]{Zanon2016b}. In order to introduce a terminal cost, we observe that Assumption~\ref{ass:terminal_stabilizing} for the LQ system implies
	\begin{align}
		\label{eq:lq_lyap_tc}
		\matr{c}{I \\ -K_\mathrm{f}}^\top \big ( \mathcal{H}\left (H_\mathrm{f}\right ) + H \big )\matr{c}{I \\ -K_\mathrm{f}} \succeq 0,
	\end{align}
	where $K_\mathrm{f}=\nabla \kappa_f(x_\mathrm{s})$. 
	We observe that
	\begin{align*}
		\mathcal H(M_\mathrm{f}) + M = \mathcal H\left (H_\mathrm{f}\right ) - \mathcal H(\delta P) + H + \mathcal H(\delta P),
	\end{align*}
	such that~\eqref{eq:lq_lyap_tc} implies
	\begin{align}
		\label{eq:lq_lyap_tc_rot}
		\matr{c}{I \\ -K_\mathrm{f}}^\top \big ( \mathcal H\left (M_\mathrm{f}\right ) + M \big )\matr{c}{I \\ -K_\mathrm{f}} \succeq 0.
	\end{align}
	which in turn entails $M_\mathrm{f}\succ0$, since $M\succ0$ and~\eqref{eq:lq_lyap_tc_rot} is a Lyapunov equation. Consequently, satisfaction of~\eqref{eq:sdp_pos_def_constraints:pd} implies satisfaction of~\eqref{eq:sdp_pos_def_constraints_terminal_pd} for $\alpha>0$ small enough.
	
	We consider now the case in which active constraints are present. Condition~\eqref{eq:sdp_pos_def_constraints:pd} follows from~\cite[Theorem 9]{Zanon2016b}. 
	In this case 
	\begin{equation*}
		\mathcal H(M_\mathrm{f}) + M + \eta \mathcal{H}\left (D_{\mathbb{A}^\mathrm{f}_\mathrm{s}}^\top GD_{\mathbb{A}^\mathrm{f}_\mathrm{s}}\right ) - \eta C_{\mathbb{A}_\mathrm{s}}^\top FC_{\mathbb{A}_\mathrm{s}} = \mathcal H\left (H_\mathrm{f}\right ) + H.
	\end{equation*}
	We then need to prove that
	\begin{equation*}
	\matr{c}{I \\ -K_\mathrm{f}}^\top \big( \mathcal H(M_\mathrm{f}) + M  \big)  \matr{c}{I \\ -K_\mathrm{f}} \succeq 0,
	\end{equation*}
	i.e., that there exists $G$ such that
	\begin{equation}
		\label{eq:condition_on_G}
		\matr{c}{I \\ -K_\mathrm{f}}^\top \big(  \eta \mathcal{H}\left (D_{\mathbb{A}^\mathrm{f}_\mathrm{s}}^\top GD_{\mathbb{A}^\mathrm{f}_\mathrm{s}}\right ) - \eta C_{\mathbb{A}_\mathrm{s}}^\top FC_{\mathbb{A}_\mathrm{s}} \big)  \matr{c}{I \\ -K_\mathrm{f}} \succeq 0.
	\end{equation}
	Positive invariance of the terminal set implies that~\cite{Kolmanovsky1998} 
	\begin{align}
		\label{eq:tc_active_include_path}
		D_{\mathbb{A}^\mathrm{f}_\mathrm{s}} &= \matr{c}{ C_{\mathbb{A}_\mathrm{s}} \matr{c}{I \\ -K_\mathrm{f}} \\ D_1},
	\end{align}
	i.e., at least all path constraints which are strictly active at steady state are also strictly active constraints in the terminal set, under the terminal feedback control law. 
	Consequently, for all $x\in\mathbb{X}_\mathrm{f}$ we have 
	\begin{align*}
		D_{\mathbb{A}^\mathrm{f}_\mathrm{s}} \matr{cc}{A&B}\matr{c}{I \\ -K_\mathrm{f}} x
		&= \zeta D_{\mathbb{A}^\mathrm{f}_\mathrm{s}} x, 
	\end{align*}
	such that
	\begin{align*}
		\matr{c}{I \\ -K_\mathrm{f}}^\top \mathcal{H}\left (D_{\mathbb{A}^\mathrm{f}_\mathrm{s}}^\top GD_{\mathbb{A}^\mathrm{f}_\mathrm{s}}\right )\matr{c}{I \\ -K_\mathrm{f}} = \zeta D_{\mathbb{A}^\mathrm{f}_\mathrm{s}}^\top GD_{\mathbb{A}^\mathrm{f}_\mathrm{s}}.
	\end{align*}
	We use these facts and select
	\begin{align*}
		G \succeq \matr{cc}{F / \zeta & 0 \\ 0 & 0},
	\end{align*} 
	such that, by~\eqref{eq:tc_active_include_path}, we obtain
	\begin{align}
		\label{eq:tc_pd_condition}
		D_{\mathbb{A}^\mathrm{f}_\mathrm{s}}^\top GD_{\mathbb{A}^\mathrm{f}_\mathrm{s}} \succeq \matr{c}{I \\ -K_\mathrm{f}}^\top  C_{\mathbb{A}_\mathrm{s}}^\top FC_{\mathbb{A}_\mathrm{s}} \matr{c}{I \\ -K_\mathrm{f}},
	\end{align}
	which entails~\eqref{eq:condition_on_G}. Therefore, a solution to Problem~\eqref{eq:sdp_pos_def_constraints} exists. 
%
\end{proof}

\begin{Remark}
	Note that a feasible solution to~\eqref{eq:sdp_pos_def_constraints} is obtained by choosing $\delta P=-\nabla^2 \Lambda(x_\mathrm{s})$. 
	Note that~\eqref{eq:sdp_pos_def_constraints:pd} with $F=0$ requires that the Hessian of the rotated stage cost is positive definite and~\eqref{eq:sdp_pos_def_constraints_terminal_pd} requires that $\nabla^2 \bar V_\mathrm{f}(x_\mathrm{s}) \succ 0$. Moreover,~\eqref{eq:sdp_pos_def_constraints:maxeig}, and~\eqref{eq:sdp_pos_def_constraints_terminal_maxeig} can be satisfied by choosing $\beta$ sufficiently large.
\end{Remark}

\begin{Remark}
	With $\mathbb{X}_\mathrm{f}=\{x_\mathrm{s}\}$, both the terminal cost and~\eqref{eq:sdp_pos_def_constraints_terminal_pd} can be removed. Alternatively, since function $V_\mathrm{f}$ can be chosen arbitrarily, one can choose it such that $\nabla^2 V_\mathrm{f}(x_\mathrm{s}) \succ - \delta P$, for any finite $\delta P$.
\end{Remark}
\begin{Remark}
	Whenever strict dissipativity holds, one can choose $\eta=0$, $F=0$. This parameter and variable, however, have been introduced in~\cite{Zanon2016b} as a remedy to a theoretical gap: while in~\cite{Amrit2011a} sufficiency of strict dissipativity has been proven, necessity has been proven in~\cite{Mueller2015} only under the additional assumption that no constraint is active at the optimal steady state. Therefore, in case of active constraints, MPC might be stabilizing even in case strict dissipativity does not hold. For more details on this topic we refer to~\cite{Zanon2016b}. 
\end{Remark}

When solving an NLP by exact-Hessian SQP or interior-point methods, in order to preserve fast convergence it is desirable to avoid modifying the reduced Hessian unless it has some direction of negative curvature. However, not all regularization strategies provide this guarantee, such that convergence could be slowed unnecessarily. 
We prove next a useful property of the proposed Hessian approximation: at steady-state the Hessian $\boldsymbol{M}$ and the exact Hessian $\boldsymbol{H}$ have the same reduced Hessian. We prove this fact in the following lemma, where we denote as $Z_\mathrm{s}$ the nullspace of the dynamic and strictly active path and terminal constraints at steady state. 
\begin{Theorem}
	\label{thm:red_hess}
	For initial guess $\xi_k=\xi_\mathrm{s}$, $\xi\in \{x,u,\lambda,\mu\}$, the Hessian $\boldsymbol{H}$ of the original MPC problem~\eqref{eq:mpc} and the convexified Hessian $\boldsymbol{M}$ share the same reduced Hessian, i.e.,
	\begin{align*}
		Z_\mathrm{s}^\top \boldsymbol{M}Z_\mathrm{s} =Z_\mathrm{s}^\top \boldsymbol{H}Z_\mathrm{s},
	\end{align*}
	where $Z_\mathrm{s}$ is the nullspace of the Jacobian of the initial, dynamic and active path constraints~\eqref{eq:mpc_ic}-\eqref{eq:mpc_pc} evaluated at steady state.
\end{Theorem}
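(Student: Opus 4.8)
The plan is to work directly with the quadratic form and show that $d^\top(\boldsymbol{M}-\boldsymbol{H})\,d = 0$ for every $d$ in the range of $Z_\mathrm{s}$, which is equivalent to the claimed identity $Z_\mathrm{s}^\top \boldsymbol{M}Z_\mathrm{s} = Z_\mathrm{s}^\top \boldsymbol{H}Z_\mathrm{s}$. By the block-diagonal structure of $\boldsymbol{M}$ and $\boldsymbol{H}$, the difference $\boldsymbol{M}-\boldsymbol{H}$ is itself block diagonal, with stage blocks $\mathcal H(\delta P)+\eta\, C_{\mathbb{A}_\mathrm{s}}^\top F C_{\mathbb{A}_\mathrm{s}}$ for $k\in\mathbb{I}_0^{N-1}$ and terminal block $-\delta P-\eta\, D_{\mathbb{A}^\mathrm{f}_\mathrm{s}}^\top G D_{\mathbb{A}^\mathrm{f}_\mathrm{s}}$. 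First I would partition an arbitrary direction as $d=(d_0,\ldots,d_N)$ with $d_k=(\Delta x_k,\Delta u_k)$ for $k<N$ and $d_N=\Delta x_N$, and translate membership in the range of $Z_\mathrm{s}$ into the linearized constraints evaluated at steady state: $\Delta x_0=0$ from~\eqref{eq:mpc_ic}, the recursion $\Delta x_{k+1}=A\Delta x_k+B\Delta u_k$ from~\eqref{eq:mpc_dyn}, and $C_{\mathbb{A}_\mathrm{s}}d_k=0$ from the active part of~\eqref{eq:mpc_pc} (together with $D_{\mathbb{A}^\mathrm{f}_\mathrm{s}}\Delta x_N=0$ from the active terminal constraint, which is what is needed to handle the $\eta=1$ term).

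Two of the three contributions then drop out almost immediately. The active-constraint penalties vanish termwise, since $\eta\, d_k^\top C_{\mathbb{A}_\mathrm{s}}^\top F C_{\mathbb{A}_\mathrm{s}} d_k=0$ whenever $C_{\mathbb{A}_\mathrm{s}}d_k=0$, and likewise $\eta\,\Delta x_N^\top D_{\mathbb{A}^\mathrm{f}_\mathrm{s}}^\top G D_{\mathbb{A}^\mathrm{f}_\mathrm{s}}\Delta x_N=0$. The heart of the argument is the $\mathcal H(\delta P)$ term. Using the definition~\eqref{eq:hess_rot} I would complete the square to obtain
\begin{align*}
d_k^\top \mathcal H(\delta P)\, d_k = (A\Delta x_k+B\Delta u_k)^\top \delta P\,(A\Delta x_k+B\Delta u_k) - \Delta x_k^\top \delta P\,\Delta x_k,
\end{align*}
and then invoke the linearized dynamics to recognize the first term as $\Delta x_{k+1}^\top \delta P\,\Delta x_{k+1}$, so that $d_k^\top \mathcal H(\delta P)\, d_k = \Delta x_{k+1}^\top \delta P\,\Delta x_{k+1}-\Delta x_k^\top \delta P\,\Delta x_k$.

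Summing this telescoping identity over $k\in\mathbb{I}_0^{N-1}$ collapses the stage contributions to the single boundary term $\Delta x_N^\top \delta P\,\Delta x_N-\Delta x_0^\top \delta P\,\Delta x_0$, and the initial condition $\Delta x_0=0$ leaves only $\Delta x_N^\top \delta P\,\Delta x_N$. This is exactly cancelled by the $-\delta P$ appearing in the terminal block, so that after adding all pieces one is left with $d^\top(\boldsymbol{M}-\boldsymbol{H})\,d=0$; since $d=Z_\mathrm{s}v$ was arbitrary, the reduced Hessians coincide. I expect the only delicate point to be the bookkeeping at the boundary: making sure the telescoped term at stage $N$ matches the terminal $-\delta P$ block exactly, and being explicit that the $\eta$-dependent penalties are annihilated by $Z_\mathrm{s}$, which is where the inclusion of the active terminal constraint in the definition of $Z_\mathrm{s}$ (as stated in the paragraph preceding the theorem) is used. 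Conceptually, the telescoping is precisely the statement that convexifying by $\delta P=-\nabla^2\Lambda(x_\mathrm{s})$ amounts to a storage-function rotation, which changes the cost only by boundary terms and hence cannot affect curvature along feasible directions.
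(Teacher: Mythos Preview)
Your proof is correct and rests on the same telescoping mechanism as the paper's, but you execute it by a direct quadratic-form computation whereas the paper argues more abstractly: it identifies $\mathcal H(\delta P)$ as the Hessian modification induced by rotating with the quadratic storage $\Lambda(x)=-x^\top\delta P\,x$ and then invokes Lemma~\ref{lem:rotation} to conclude that the modification lies in the range space of the constraint Jacobian, hence is annihilated by $Z_\mathrm{s}$. Your explicit telescoping is precisely the linear-algebraic content of that rotation argument, as you yourself note in the final sentence. The $C_{\mathbb{A}_\mathrm{s}}^\top F C_{\mathbb{A}_\mathrm{s}}$ and $D_{\mathbb{A}^\mathrm{f}_\mathrm{s}}^\top G D_{\mathbb{A}^\mathrm{f}_\mathrm{s}}$ terms are handled identically in both proofs (they vanish on the nullspace of the active constraints by construction), though you are more explicit about the terminal $G$ term than the paper is. Your approach is slightly more elementary and self-contained; the paper's buys a cleaner conceptual link to the dissipativity framework already developed in Section~\ref{sec:preliminaries}. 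One minor point worth stating explicitly: the passage from ``$d^\top(\boldsymbol{M}-\boldsymbol{H})d=0$ for all $d$ in the range of $Z_\mathrm{s}$'' to ``$Z_\mathrm{s}^\top(\boldsymbol{M}-\boldsymbol{H})Z_\mathrm{s}=0$'' uses the symmetry of $\boldsymbol{M}-\boldsymbol{H}$ via polarization.
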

\begin{proof}
	We consider first the case in which there are no active constraints at steady state. In this case, we prove that $\boldsymbol{M}$ can be obtained as a rotation relying on a quadratic storage function. By construction, see Lemma~\ref{lem:rotation}, rotating with any storage function yields $\bar J(w)=J(w)+\Lambda(x_0)$. This implies that the modification spans the range space of the Jacobian of the dynamic constraints~\eqref{eq:mpc_dyn}. Consider the nullspace $Z_0 = \matr{cccc}{I & 0 & \cdots & 0}^\top$ of the initial constraint $x_0-\hat x_0=0$, then $Z_0^\top \nabla^2_{ww}\Lambda(x_0) Z_0 = 0$. Now take $\Lambda(x):= -x^\top \delta P x$, such that
	\begin{align*}
		\nabla^2 \Lambda(x_\mathrm{s}) =&\ - \delta P,\\
		\nabla^2 \Lambda(f(x_\mathrm{s},u_\mathrm{s})) =&\ -\matr{cc}{ A^\top \delta P A & A^\top \delta P B \\ B^\top \delta P A & B^\top \delta P B } \\
		&\ \hspace{5em}+ \left \langle \nabla^2 f(x_\mathrm{s},u_\mathrm{s}) , \nabla \Lambda(x_\mathrm{s}) \right \rangle,
	\end{align*}
	and $\nabla^2 \left (\Lambda(x_\mathrm{s}) - \Lambda(f(x_\mathrm{s},u_\mathrm{s})) \right ) = \mathcal H(\delta P)$, since
	\begin{align*}
		\nabla \Lambda(x_\mathrm{s}) = - \delta P x_\mathrm{s} = 0.
	\end{align*}
	This concludes the proof in the case of no active path constraint at steady state. In case there is some active constraint, the term $\boldsymbol{F}:=\mathrm{blkdiag}(C_{\mathbb{A}_\mathrm{s}}^\top FC_{\mathbb{A}_\mathrm{s}},\ldots,C_{\mathbb{A}_\mathrm{s}}^\top FC_{\mathbb{A}_\mathrm{s}},0)$ is nonzero. However, this term does not modify the reduced Hessian by construction, since it spans (a subspace of) the range space of the Jacobian of the strictly active constraints.
\end{proof}
\begin{Corollary}[of Theorem~\ref{thm:red_hess}]
	\label{cor:steps}
	If the solver is initialized with the steady-state solution $\xi_k=\xi_\mathrm{s}$, $\xi\in \{x,u,\lambda,\mu\}$, the first SQP iterate generated by using the (regularized) exact Hessian $\left [ \nabla^2_{ww} \mathcal{L} \right ]_+$ generates the same primal step as the first SQP iterate using either the Hessian approximation $\boldsymbol{M}$ or $\left [ \boldsymbol{H} \right ]_+$. The dual step also coincides if the latter is used but can be different otherwise.
\end{Corollary}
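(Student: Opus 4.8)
The plan is to reduce the statement to two facts already in hand: that at the steady-state initialization the exact Lagrangian Hessian \emph{equals} $\boldsymbol{H}$, and the reduced-Hessian identity of Theorem~\ref{thm:red_hess}. First I would observe that, when $\xi_k=\xi_\mathrm{s}$ for $\xi\in\{x,u,\lambda,\mu\}$, the exact Hessian $\nabla^2_{ww}\mathcal{L}$ of~\eqref{eq:mpc} coincides block-by-block with $\boldsymbol{H}$: each stage block is $\nabla^2_{ww}\mathcal{L}_\mathrm{s}(w_\mathrm{s},\lambda_\mathrm{s},\mu_\mathrm{s})=H$ and the terminal block is $\nabla^2 V_\mathrm{f}(x_\mathrm{s})=H_\mathrm{f}$, precisely because the multipliers are initialized at $\lambda_\mathrm{s},\mu_\mathrm{s}$. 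Hence $\left[\nabla^2_{ww}\mathcal{L}\right]_+=\left[\boldsymbol{H}\right]_+$ as matrices, so the regularized-exact QP and the $\left[\boldsymbol{H}\right]_+$ QP are literally the same problem and share both primal and dual step; this disposes of the $\left[\boldsymbol{H}\right]_+$ claim, dual included.

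The substance is the \emph{primal} equality of the $\boldsymbol{M}$-step with the $\left[\boldsymbol{H}\right]_+$-step. Here I would first note that at the steady-state initialization all three QPs carry the same cost gradient $\nabla J(w_\mathrm{s})$ and the same constraint linearization; the only nonzero residual is the initial-value mismatch, so every particular primal solution $p_{\mathrm Y}$ of the linearized constraints satisfies the homogeneous linearized dynamics and the strictly active path/terminal constraints. Writing $\boldsymbol{M}=\boldsymbol{H}+\boldsymbol{\Delta}+\boldsymbol{F}$ as in the proof of Theorem~\ref{thm:red_hess}, with $\boldsymbol{\Delta}$ the Hessian of the quadratic rotation $\Lambda(x)=-x^\top\delta P x$ and $\boldsymbol{F}:=\mathrm{blkdiag}(C_{\mathbb{A}_\mathrm{s}}^\top FC_{\mathbb{A}_\mathrm{s}},\ldots,0)$, I would discard $\boldsymbol{F}$ immediately: since $\boldsymbol{F}$ spans the range of the active-constraint Jacobian we have $Z_\mathrm{s}^\top\boldsymbol{F}=0$, so $\boldsymbol{F}$ changes neither $Z_\mathrm{s}^\top\boldsymbol{M}Z_\mathrm{s}$ nor the cross term $Z_\mathrm{s}^\top\boldsymbol{M}p_{\mathrm Y}$, hence not the primal step.

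For $\boldsymbol{\Delta}$ I would invoke the nullspace characterization of the QP step, $p=p_{\mathrm Y}+Z_\mathrm{s}p_{\mathrm Z}$ with $Z_\mathrm{s}^\top B Z_\mathrm{s}\,p_{\mathrm Z}=-Z_\mathrm{s}^\top(\nabla J(w_\mathrm{s})+Bp_{\mathrm Y})$, which shows that the primal step depends on the Hessian $B$ only through $Z_\mathrm{s}^\top B Z_\mathrm{s}$ and $Z_\mathrm{s}^\top B p_{\mathrm Y}$. Theorem~\ref{thm:red_hess} already gives $Z_\mathrm{s}^\top\boldsymbol{\Delta}Z_\mathrm{s}=0$; the remaining point is $Z_\mathrm{s}^\top\boldsymbol{\Delta}p_{\mathrm Y}=0$, which I would obtain from the telescoping of the quadratic rotation: for any two directions satisfying the homogeneous linearized dynamics the form $z^\top\boldsymbol{\Delta}p_{\mathrm Y}$ collapses to the boundary term $-z_{x_0}^\top\delta P\,p_{\mathrm Y,x_0}$, which vanishes because $z_{x_0}=0$ for $z\in Z_\mathrm{s}$. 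This is exactly the mechanism underlying Lemmas~\ref{lem:iterations_coincide} and~\ref{lem:different_iterations}, so the same conclusion can alternatively be reached by routing $\boldsymbol{M}$ through the rotated problem~\eqref{eq:rmpc}. Either way $p_{\mathrm Z}$, and therefore the primal step, is identical for $\boldsymbol{M}$ and $\left[\boldsymbol{H}\right]_+$.

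Finally, for the dual step I would track the range-space component, which does see $\boldsymbol{\Delta}$ and $\boldsymbol{F}$: by the argument of Lemma~\ref{lem:different_iterations} the adjoints shift by $\Delta\lambda_k^{\mathrm{QP}_0}=\nabla^2\Lambda_k^{(0)}x_k^{\mathrm{QP}_0}+\nabla\Lambda_k^{(0)}=-\delta P\,x_k^{\mathrm{QP}_0}$, which is generically nonzero when $x_k^{\mathrm{QP}_0}\neq0$ and $\delta P\neq0$, so the dual step of $\boldsymbol{M}$ may differ from that of $\left[\boldsymbol{H}\right]_+$. I expect the genuine obstacle to be precisely this cross term: the naive reading of Theorem~\ref{thm:red_hess} — equal reduced Hessians imply equal primal steps — is insufficient, since $Z_\mathrm{s}^\top\boldsymbol{\Delta}\neq0$ in general, and the primal equality survives only because $p_{\mathrm Y}$ itself lies in the span of dynamics-feasible directions, which is what makes the offending term telescope to zero.
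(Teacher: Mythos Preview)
Your proof is correct and follows the same route as the paper, but with substantially more care. The paper's own proof is two lines: the primal claim ``follows from Theorem~\ref{thm:red_hess},'' and the dual claim follows because $\boldsymbol{H}=\nabla^2_{ww}\mathcal{L}$ at the steady-state initialization (so $\left[\nabla^2_{ww}\mathcal{L}\right]_+=\left[\boldsymbol{H}\right]_+$ literally), while $\boldsymbol{M}\neq\left[\boldsymbol{H}\right]_+$ in general. You are right that the bare reduced-Hessian identity $Z_\mathrm{s}^\top\boldsymbol{M}Z_\mathrm{s}=Z_\mathrm{s}^\top\boldsymbol{H}Z_\mathrm{s}$ is not by itself enough for the primal step: the null-space equation also carries the cross term $Z_\mathrm{s}^\top(\boldsymbol{M}-\boldsymbol{H})p_{\mathrm Y}$, and your telescoping argument (collapsing it to $-z_{x_0}^\top\delta P\,p_{\mathrm Y,x_0}=0$ since $z_{x_0}=0$ on $Z_\mathrm{s}$) is exactly what is needed to close this. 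This is the same mechanism that drives Lemma~\ref{lem:iterations_coincide}, which is presumably what the paper has in mind but does not spell out. In short: same approach, but you make explicit a step the paper leaves implicit.
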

\begin{proof}
	The first claim follows from Lemma~\ref{thm:red_hess}. The second claim is proven by noting that $\boldsymbol{H}=\nabla^2_{ww} \mathcal{L}(\boldsymbol{w}_\mathrm{s},\boldsymbol{\lambda}_\mathrm{s},\boldsymbol{\mu}_\mathrm{s},0)$, with $\boldsymbol{\cdot}_\mathrm{s}$ the primal-dual trajectory at steady-state. Since the full Hessian coincides, also the dual step coincides. When Hessian approximation $\boldsymbol{M}$ is used, unless $\left [ \boldsymbol{H} \right ]_+=\boldsymbol{M}$ the dual step will be different.
\end{proof}

\subsection{The Economic Real-Time Iteration Scheme}
\label{sec:rti}

In the following, we consider the Real-Time Iteration (RTI) scheme~\cite{Diehl2002b,Diehl2005b}: a popular scheme for real-time NMPC. 
Alternative approaches for real-time NMPC include the Advanced Step NMPC Controller~\cite{Zavala2009} and the continuation/GMRES approach~\cite{Ohtsuka2004}. These approaches are all based on similar ideas: they rely on some form of path-following; 
fast contraction of Newton's method; sufficient regularity of the MPC problem; and a good initial guess constructed using the solution at the previous time. 

The main RTI stability result~\cite[Theorem 6.3]{Diehl2005b} does not prove asymptotic stability, but rather 
\begin{align*}
\lim_{k\to\infty} x_k = x_\mathrm{s}
\end{align*}
The theory relies on 6 Assumptions. 
Since they are rather technical, we only provide an intuitive explanation: 
(a) $\bar \ell(x,u) \geq m\|x\|^2$ for some $m>0$;
(b) is a controllability assumption which guarantees some form of regularity of the MPC problem and is often used to prove stability for MPC
(c) is a standard assumption needed to prove convergence of Newton's method;
(d),(e),(f) make sure that the shift of the MPC solution at the previous time step is a good initial guess for the MPC problem at the current time step. 

The main difficulty in applying the theory for RTI stability to economic MPC is due to the fact that (a) is violated, since $\ell(x,u) \ngeq \alpha(\|x\|)$. One idea to extend the theoretical framework would be to resort to the rotated MPC problem to prove stability exploiting the following corollary.
\begin{Corollary}[of Theorem 6.3 in~\cite{Diehl2005b}]
	\label{cor:rotated_mpc_rti}
	Suppose that the assumptions of Theorem 6.3 in~\cite{Diehl2005b} hold for the rotated MPC problem~\eqref{eq:rmpc}. Then, RTI stability, i.e., $\lim_{k\to\infty} x_k = x_\mathrm{s}$ holds for rotated MPC.
\end{Corollary}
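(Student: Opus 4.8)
The plan is to treat this as a direct instantiation of Theorem 6.3 in~\cite{Diehl2005b}, applied to the rotated problem rather than to the original one. First I would observe that the rotated MPC problem~\eqref{eq:rmpc} is structurally an ordinary MPC problem: it minimizes the finite-horizon sum of stage costs $\bar \ell$ plus a terminal cost $\bar V_\mathrm{f}$, subject to exactly the same dynamics, path constraints, and terminal constraint as~\eqref{eq:mpc}. Consequently it belongs to the class of problems covered by the cited theorem, and the RTI scheme is well-defined on it.

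Second, since by hypothesis all six assumptions of Theorem 6.3 hold for~\eqref{eq:rmpc}, the theorem applies verbatim. Its conclusion is precisely that the closed-loop state trajectory produced by the RTI scheme on the rotated problem satisfies $\lim_{k\to\infty} x_k = x_\mathrm{s}$, which is the claim. No further argument is needed, since the hypotheses are assumed to hold for the rotated problem outright.

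I do not expect a substantive obstacle in the proof itself; the content of the corollary lies entirely in the observation that the rotated cost can satisfy the positive-definiteness requirement of assumption~(a), namely $\bar \ell(x,u) \geq m\|x\|^2$ for some $m>0$, by virtue of strict dissipativity, whereas the original economic cost $\ell$ cannot. The genuine difficulty, which this corollary deliberately sidesteps and which is resolved elsewhere, is that RTI iterates on the rotated and original problems do not coincide when the exact Hessian is used (cf. Lemma~\ref{lem:different_iterations}). Hence this corollary by itself does not yield RTI stability for the original economic MPC problem; bridging that gap requires the dual-independent Hessian approximation $\boldsymbol{M}$, for which the rotated and original iterates agree.
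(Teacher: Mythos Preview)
Your proposal is correct and essentially matches the paper's approach: once the assumptions of Theorem~6.3 in~\cite{Diehl2005b} are granted for the rotated problem, the conclusion is immediate. The only difference is one of emphasis: the paper's proof does not simply invoke the hypothesis but explicitly argues that assumption~(a) is indeed satisfiable for~\eqref{eq:rmpc}, namely that strict dissipativity gives $\bar\ell(x,u)\geq\alpha(\|x\|)$, and together with positive-definiteness of the rotated-cost Hessian at $x_\mathrm{s}$ and compactness this yields $\bar\ell(x,u)\geq m\|x\|^2$ --- a point you relegate to commentary but which the paper treats as the actual content of the proof.
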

\begin{proof}
	By construction, $\bar \ell(x,u) \geq \alpha(\|x\|)$. Moreover, at $\hat x_0 = x_\mathrm{s}$ the Hessian of the rotated cost is positive-definite. Therefore, on a compact set the rotated cost satisfies $\bar \ell(x,u) \geq m\|x\|^2$.
\end{proof}
 Unfortunately, as proven in Lemma~\ref{lem:different_iterations}, the primal-dual iterations performed on the original and rotated problems do not coincide in general. For Hessian approximations which are independent of the dual variables, however, Lemma~\ref{lem:iterations_coincide} establishes that the primal-dual iterations performed on the original and rotated problem coincide. 
 \begin{Theorem}
 	\label{thm:rti_stability}
 	Suppose that the assumptions of Theorem 6.3 in~\cite{Diehl2005b} hold for the original MPC problem~\eqref{eq:mpc}, with the exception of $\ell(x,u) \geq m\|x\|^2$, which is replaced by strict dissipativity with $\rho(\|x\|)\geq m\|x\|^2$. Then, RTI stability holds, i.e., $\lim_{k\to\infty} x_k = x_\mathrm{s}$, provided that the Hessian approximation is independent of the dual variables.
 \end{Theorem}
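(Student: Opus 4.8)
The plan is to reduce the claim to Corollary~\ref{cor:rotated_mpc_rti}, which already yields RTI stability for the rotated problem~\eqref{eq:rmpc}, and then transport that conclusion back to the original problem~\eqref{eq:mpc} by exploiting the coincidence of primal iterates established in Lemma~\ref{lem:iterations_coincide}. The decisive feature is the hypothesis that the Hessian approximation is independent of the dual variables: this is precisely the property that fails for exact Hessian in Lemma~\ref{lem:different_iterations}, and its presence restores the equivalence of the two RTI closed loops.

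First I would verify that the six assumptions of Theorem 6.3 in~\cite{Diehl2005b} hold for the rotated problem. Assumption (a), namely $\bar \ell(x,u)\geq m\|x\|^2$ on a compact set, is exactly the content of the proof of Corollary~\ref{cor:rotated_mpc_rti}: strict dissipativity with $\rho(\|x\|)\geq m\|x\|^2$, together with positive-definiteness of $\nabla^2\bar \ell(x_\mathrm{s},u_\mathrm{s})$, supplies the required lower bound. The remaining assumptions (b)--(f) concern only the dynamics, the constraints, and the quality of the shifted initial guess. Since rotation leaves the constraints~\eqref{eq:mpc_ic}--\eqref{eq:mpc_tc} untouched and alters the cost only by affine-in-state terms plus the constant $\Lambda(\hat x_0)$ (Lemma~\ref{lem:rotation}), the controllability assumption (b) and the regularity assumption (c) transfer, while the primal equivalence of the two problems makes the warm-started shift an equally admissible initial guess, so (d)--(f) carry over.

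Next I would invoke Corollary~\ref{cor:rotated_mpc_rti} to obtain $\lim_{k\to\infty} x_k = x_\mathrm{s}$ for the rotated RTI closed loop, and then transport this to the original problem. By Lemma~\ref{lem:iterations_coincide}, under a dual-independent Hessian approximation and the same primal initial guess, every primal RTI step on~\eqref{eq:mpc} coincides with the corresponding step on~\eqref{eq:rmpc}; in particular the applied feedback $u_0^*$, and hence the entire closed-loop state trajectory, is identical for the two problems. Therefore the rotated stability guarantee carries over unchanged to the original economic MPC closed loop, which is the desired conclusion.

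I expect the main obstacle to lie in assumptions (c)--(f), which in~\cite{Diehl2005b} are phrased in terms of the full primal-dual KKT system and therefore appear, a priori, to depend on the Lagrange multipliers, which differ between the two problems. The resolution is that Lemma~\ref{lem:iterations_coincide} pins the primal iterates together while Lemma~\ref{lem:rotation} shows the dual iterates differ only by the fixed, smooth, state-dependent shift $\nabla \Lambda(x_k)$; since this shift is bounded and enters neither the primal contraction estimate underlying (c) nor the shift-admissibility arguments (d)--(f), the dual discrepancy is inert for the stability conclusion. Care must be taken to confirm that the Lipschitz and regularity constants entering Theorem 6.3 can be stated for the primal iteration map alone, which the dual-independence of the Hessian guarantees.
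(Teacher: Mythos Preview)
Your proposal is correct and follows exactly the same route as the paper: invoke Corollary~\ref{cor:rotated_mpc_rti} for the rotated problem, then use Lemma~\ref{lem:iterations_coincide} to identify the primal RTI iterates of~\eqref{eq:mpc} and~\eqref{eq:rmpc}, thereby transporting the stability conclusion back to the original formulation. The paper's own proof is a single sentence citing precisely these two results; your added discussion of how assumptions (b)--(f) transfer and why the dual shift $\nabla\Lambda(x_k)$ is inert is useful elaboration but does not change the underlying argument.
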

\begin{proof}
	The proof follows from Lemma~\ref{lem:iterations_coincide} and Corollary~\ref{cor:rotated_mpc_rti}.
\end{proof}


\begin{Remark}
	\label{rem:eHrti}
	The surprising fact is that this result holds for the GN Hessian approximation, but not for exact Hessian, nor for BFGS Hessian approximations. For such cases, a more refined analysis is required, which is beyond the scope of this paper. 
	However, the RTI stability analysis is to be understood as a theoretical justification supporting the use of RTI in practice. The most important concept can be summarized as: \emph{the fast contraction rate of Newton's method is used to reject the perturbations acting on the closed-loop system}. In this view, though a proof for exact Hessian would require a deeper analysis, one can expect that RTI based on exact Hessian will also be stabilizing.
	Finally, using the identity matrix as Hessian approximation satisfies the theoretical requirements, but might lead to a very small region of attraction and poor performance. 
\end{Remark}

Future work will aim at extending the results to the a more general stability analysis such as, e.g., the one provided in~\cite{Liaomcpherson2020,Zanelli2020}.

\section{Numerical Results}
\label{sec:simulations}


In order to illustrate the theoretical developments, we propose two simple examples. With the first example we verify the results of Lemma~\ref{lem:different_iterations} and Corollary~\ref{cor:steps} and then we compare the different Hessian approximations in closed loop. With the second example, we further illustrate the possible limitations of naive Hessian approximations, such as the identity matrix. 

\subsection{Evaporation Process}
Consider the evaporation process described by states $x = (X_2, \, P_2)$, controls $u=(P_{100},\,F_{200})$ and dynamic equations~\cite{Amrit2013a}:
\begin{align}
M \dot X_2 &= F_1 X_1 - F_2 X_2, &
C \dot P_2 &= F_4-F_5,
\end{align}
where 
\begin{align*}
T_2 &= aP_2 + bX_2 + c, & T_3 &= dP_2 + e, \\
\lambda F_4 &= Q_{100} - F_1C_\mathrm{p}(T_2-T_1), & T_{100} &= fP_{100} + g, \\
Q_{100} &= U_{A_1}(T_{100}-T_2), & U_{A_1} &= h(F_1+F_3), \\
Q_{200} &= \frac{U_{A_2}(T_3-T_{200})}{1+U_{A_2}/(2C_\mathrm{p}F_{200})},& F_{100} &=\frac{Q_{100}}{\lambda_\mathrm{s}} , \\
\lambda F_5 &= Q_{200}, & F_2 &= F_1-F_4.
\end{align*}
All parameters are given in~\cite{Zanon2016b}. 
The economic objective is
\begin{align*}
\ell(x,u) = 10.09(F_2+F_3) + 600 F_{100} + 0.6 F_{200}.
\end{align*}
The system is subject to the following constraints
\begin{align*}
X_2&\geq 25 \,\%, & 40\,\mathrm{kPa}\leq P_2 &\leq 80\,\mathrm{kPa}, \\
P_{100}&\leq 400\,\mathrm{kPa} , & F_{200} & \leq 400 \, \mathrm{kg/min}.
\end{align*}

The optimal steady state is given by
\begin{align}
x_\mathrm{s} = (25, \,49.743), && u_\mathrm{s} = (191.713, \, 215.888).
\end{align}


We use sampling time $t_\mathrm{s}=1 \, \mathrm{min}$ and formulate the NMPC scheme using direct multiple-shooting with a piecewise constant control parametrization, $N=200$ and terminal constraint $x_N=x_\mathrm{s}$. 


\begin{figure}[t]
	\begin{center}
			
		\includegraphics[width=0.9\linewidth]{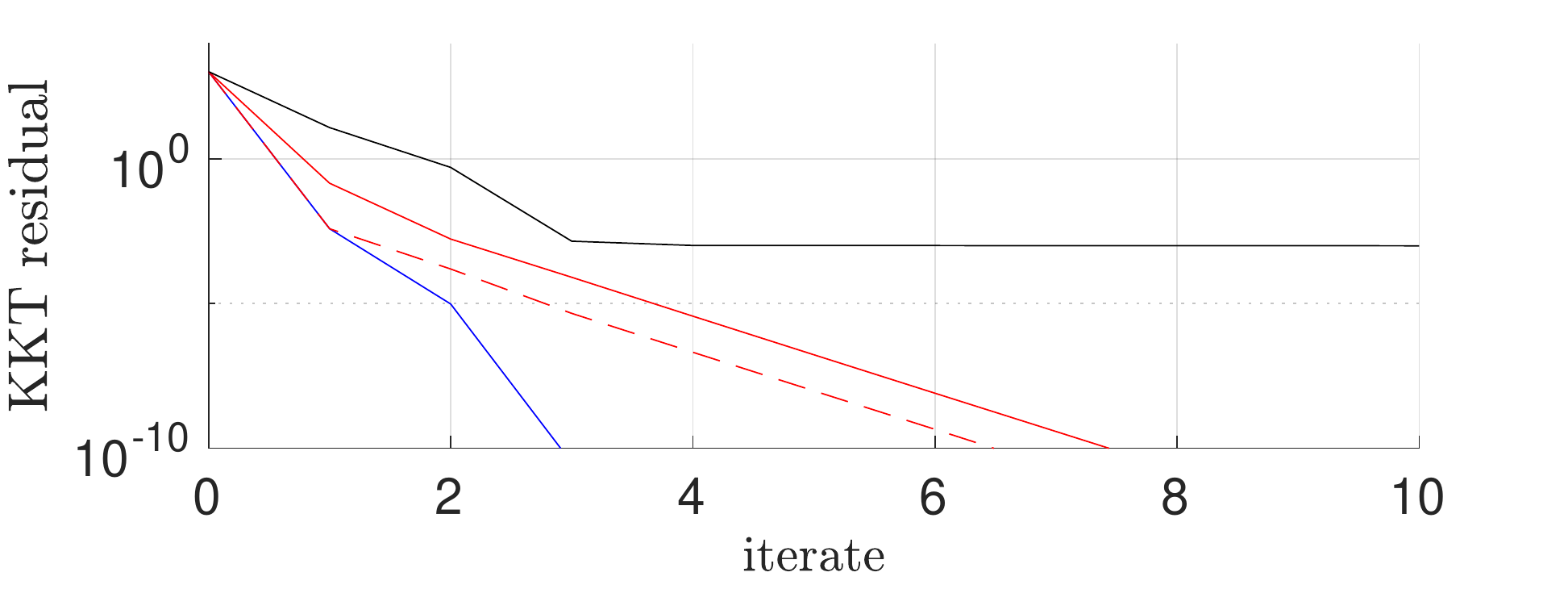}
		\caption{Evaporation process. Convergence for different Hessian approximations: exact (blue), proposed GN (red), proposed GN with dual correction (dashed red), steepest descent (black). 
		}
		\label{fig:evap_convergence}
		
		\includegraphics[width=0.9\linewidth]{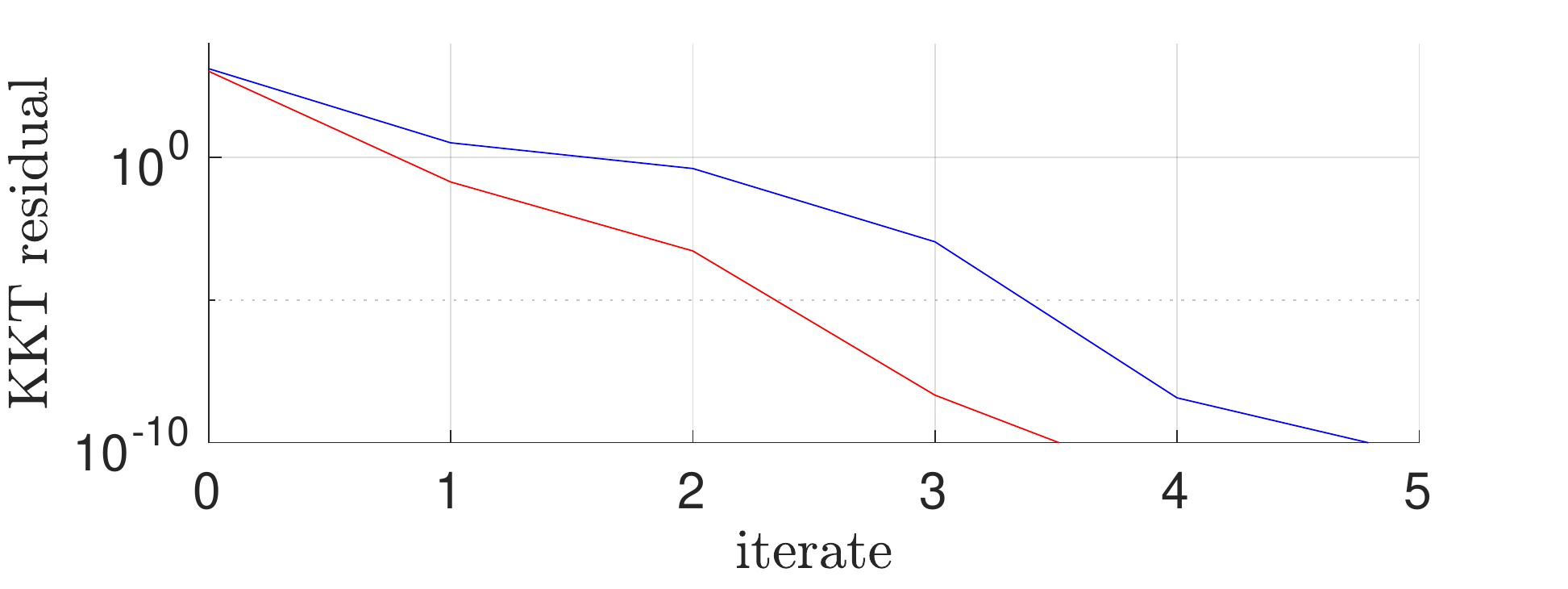}
		\caption{Evaporation process. Convergence for original (blue line) and rotated EMPC (red line) with exact Hessian.}
		\label{fig:evap_rtrack}
		
\end{center}
\end{figure}


Figure~\ref{fig:evap_convergence} displays the convergence of SQP with different Hessian approximations for one instance of the MPC OCP~\eqref{eq:mpc} with $\hat x_0=(35, \,49.743)$. Exact Hessian yields the fastest convergence while steepest descent has very slow convergence. The proposed Hessian approximation has linear convergence with a fast contraction rate, which is typical of Gauss-Newton Hessian approximations see, e.g.,~\cite[Figure~5.1]{Diehl2001}. Since the algorithm is initialized with the steady-state solution, confirming the results of Corollary~\ref{cor:steps}, the primal steps given by exact and GN Hessian coincide, but the dual steps do not. Therefore, the KKT residual after the first iterate is different. Note that~\cite[Algorithm 5]{Verschueren2017} tackles this issue by an ad-hoc computation of the Lagrange multipliers. The convergence with this correction is also displayed in Figure~\ref{fig:evap_convergence}. The primal iterates are unaffected by this correction, which then becomes less useful in an online setting, where one is interested in the primal variables only.

Since the theory also applies to interior-point algorithms, we solved the same problem using Ipopt~\cite{Waechter2009} within CasADi~\cite{Andersson2018}. 
Ipopt converged in $18$ iterations with exact Hessian and $19$ with the GN Hessian; however, the computational times are ${t_\mathrm{c}^\mathrm{Exact}} \approx 2.4{t_\mathrm{c}^\mathrm{GN}}$. 


In order to verify Lemma~\ref{lem:different_iterations}, we constructed an economic MPC formulation artificially by rotating a tracking MPC formulation and solved both EMPC and rotated EMPC with exact Hessian. We used rotated stage cost $\bar \ell(x,u) = 10\,x^\top x + 0.1\, u^\top u$, storage function $\Lambda(x) = 100\,x^\top x$, and $\hat x_0=(35, \,49.743)$. The convergence of SQP is displayed in Figure~\ref{fig:evap_rtrack}, where the rotated formulation converges slightly faster. 

\begin{figure}[t]
	\includegraphics[width=0.9\linewidth]{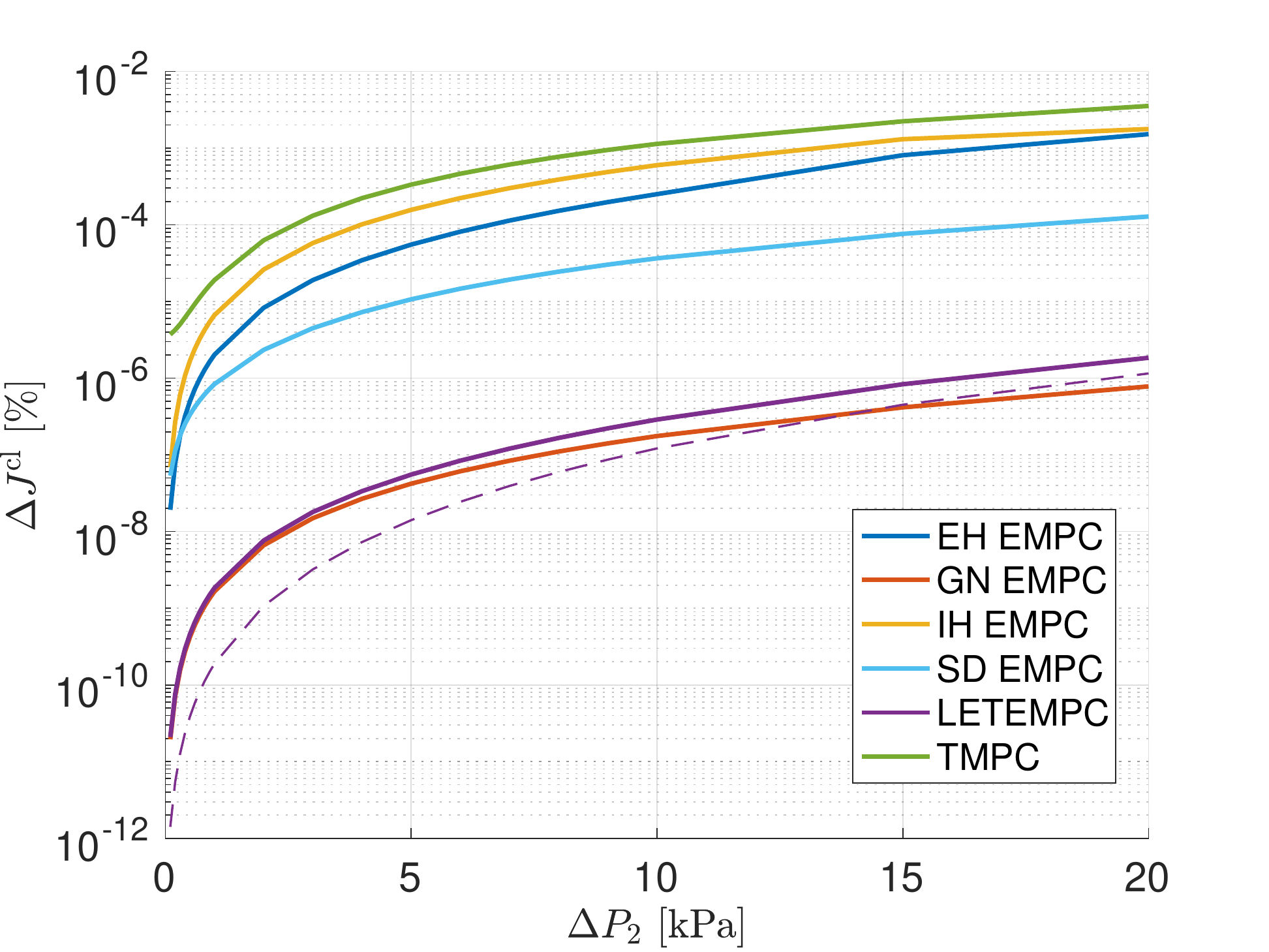}
	\caption{Evaporation process. Closed-loop performance loss for the different EMPC, LETEMPC and TMPC schemes. RTI schemes in continuous line, fully converged in dashed line, baseline: converged EMPC.}
	\label{fig:evap_cl_cost}
\end{figure}

Finally, we consider the cost of closed-loop trajectories obtained with: fully converged EMPC; exact Hessian (EH) RTI EMPC, i.e., $L^{(i)}=\left [ \nabla^2_{ww} \mathcal{\hat L} \right ]_+$; GN RTI EMPC, i.e., $L^{(i)}=\boldsymbol{M}$; indefinite (IH) RTI EMPC, i.e., $L^{(i)}=\boldsymbol{H}$; steepest descent (SD) RTI EMPC, i.e.,  $L^{(i)}=I$; fully converged LETEMPC; GN RTI LETEMPC, i.e., $L^{(i)}=\boldsymbol{M}$; fully converged tracking MPC (TMPC); and GN RTI TMPC. For tracking MPC we $\ell^\mathrm{t}(x,u)=10x^\top x + 0.1 u^\top u$.
We measure performance loss as %
$$\Delta J^\mathrm{cl}:= \frac{J^\mathrm{cl} - J^\mathrm{cl}_\mathrm{EMPC}}{\ell(w_\mathrm{s})N_\mathrm{sim}}100,$$
where $J^\mathrm{cl}_\mathrm{EMPC}$ is the closed-loop cost relative to the fully converged EMPC scheme and $N_\mathrm{sim}$ the simulation duration. 
The simulation results are displayed in Figure~\ref{fig:evap_cl_cost} for initial condition $\hat x_0 = x_\mathrm{s} + (0,\Delta P_2)$, with $\Delta P_2 \in [0,10]$. One can see that GN RTI EMPC is the best among all RTI schemes. Its performance being better than EH RTI EMPC can be explained by the fact that the Hessian regularization procedure was applied on the full Hessian rather than on the reduced one, which is known to slow convergence. The reduced Hessian, however, was not positive definite, such that some form of regularization was necessary. 
Surprisingly, in this example steepest descent did not perform too bad, compared to other schemes. Tracking MPC yields the worst performance, with RTI and fully converged schemes indistinguishable by eye inspection. Finally, the performance of LETEMPC is extremely close to that of GN RTI EMPC.


\subsection{Energy-Optimal Driving}

Consider the following simple electric car model 
\setlength{\extrarowheight}{3pt}
\begin{align*}
	\matr{c}{\dot p_x \\ \dot p_y \\ \dot v \\ \phantom{1} \dot \theta \phantom{1} \\ \dot \delta} &= \matr{c}{ v \cos(\theta)  \\ v \sin(\theta) \\\frac{G_\mathrm{r}}{r}T  - F_\mathrm{b} - F_\mathrm{d}  \\ v\frac{\tan(\delta)}{L} \\ u_\delta } = f_\mathrm{c}\left (\matr{c}{p_x \\  p_y \\ v \\ \phantom{1} \theta \phantom{1} \\ \delta}, \matr{c}{T \\ F_\mathrm{b} \\ u_\delta }\right ),
\end{align*}%
\setlength{\extrarowheight}{0pt}%
where $(p_x,p_y)$ is the position, $v$ the longitudinal velocity, $\theta$ the orientation, $\delta$, $u_\delta$ the steering angle and velocity, $T$ the mechanical torque delivered by the electric motor, $F_\mathrm{b}$ the brake force, and 
$
	F_\mathrm{d} = C_\mathrm{d} v^2 + mgC_\mathrm{r}
$
the drag force due to the aerodynamics and rolling resistance. 
We use parameter values $L=4.8 \ \mathrm{m}$, $m=1700 \ \mathrm{kg}$, $r=0.35 \ \mathrm{m}$, $g=9.81 \ \mathrm{m/s}^2$ $G_\mathrm{r}=7.94$, $C_\mathrm{d}=0.45$ and $C_\mathrm{r}=0.015$. 
We neglect the internal dynamics of electrical motors and assume that the requested torque is delivered instantaneously.
The motor angular velocity is given by
$
	\omega = G_\mathrm{r} \frac{v}{r};
$
and the motor and brakes are subject to the constraints
\begin{align*}
	0 &\leq T \leq \bar T, && T \leq \frac{\bar P}{\omega}, &
	0 &\leq \omega \leq \bar \omega, &
	0 &\leq F_\mathrm{b} \leq \bar F_\mathrm{b},
\end{align*}
where $\bar T= 280 \ \mathrm{Nm}$, $\bar P= 80 \ \mathrm{kW}$, $\bar \omega= 10000 \ \mathrm{rpm}$, $\bar F_\mathrm{b}= 10 \ k\mathrm{N}$. 
We lump all inequality constraints in function $h(x,u)\leq 0$.

The electrical power absorbed by the motor is
\begin{align*}
	P(x,u) = \frac{\omega T}{\eta(\omega,T)}, && \eta(T,\omega)=\frac{\omega T}{\omega T + P_\mathrm{loss}(T,\omega)},
\end{align*}
with efficiency $\eta$ and 
we approximate
\begin{align*}
P_\mathrm{loss}(T,\omega)= 0.0323 \, \omega T + 0.0183 \, \omega^2 + 0.0043 \, T^2,
\end{align*}
where the coefficients have been identified from data~\cite{Murgovski2014}.

In order to minimize fuel consumption while enforcing a prescribed velocity, we adopt the approach proposed in~\cite{Hult2018a} and use the stage cost $\ell(x,u)=\ell^\mathrm{e}(x,u)+\ell^\mathrm{t}(x,u)$ with
\begin{align*}
	\ell^\mathrm{e}(x,u) &= 	P(x,u) + \alpha v \\
	\ell^\mathrm{t}(x,u) &=  b_0 (p_y-p_y^\mathrm{r})^2+  b_1 \theta^2+ b_2 \delta^2 + b_3 u_\delta^2,
\end{align*}
two cost components aiming at minimizing respectively the energy consumption for the prescribed velocity and the lateral deviation from a prescribed reference. For a strategy to choose $\alpha$ so as to enforce that a desired velocity $v^\mathrm{r}$ is attained at the equilibrium we refer to~\cite{Hult2018a}. We choose weights $b_i=1$, $i=0,1,2,3$. 
While many different choices are possible for the cost function, a thorough discussion on the most appropriate choice is beyond the scope of this paper.

We formulate the MPC OCP~\eqref{eq:mpc} in the multiple shooting framework, using a sampling time $t_\mathrm{s}=0.1 \ \mathrm{s}$, a prediction horizon $N=100$ sampling instants discretize the dynamics $f_\mathrm{c}$ using one step of an explicit Runge-Kutta integrator of order $4$ with $5$ steps per control interval to obtain the state transition function $x_{k+1}=f(x_k,u_k)$. For the terminal cost we use the quadratic cost-to-go associated with the LQR formulated at steady state. 

We simulate the system in closed-loop using $v^\mathrm{r}=50 \ \mathrm{km/h}$, i.e., $\alpha=0.055$, and a step reference being $p_y^\mathrm{r}(t) = 0 \ \mathrm{m}$, for $t< 8 \ \mathrm{s}$ and $p_y^\mathrm{r}(t) = \Delta p_y^\mathrm{r}$ for $t\geq 8 \ \mathrm{s}$, with $\Delta p_y^\mathrm{r} \in [0,3] \ \mathrm{m}$. All other references are set to $0$. We introduce an obstacle enforcing the additional constraint $p_x(t)\leq 80 \ \mathrm{m}$, for $t\leq 6 \ \mathrm{s}$.

\begin{figure}
	\includegraphics[width=0.9\linewidth]{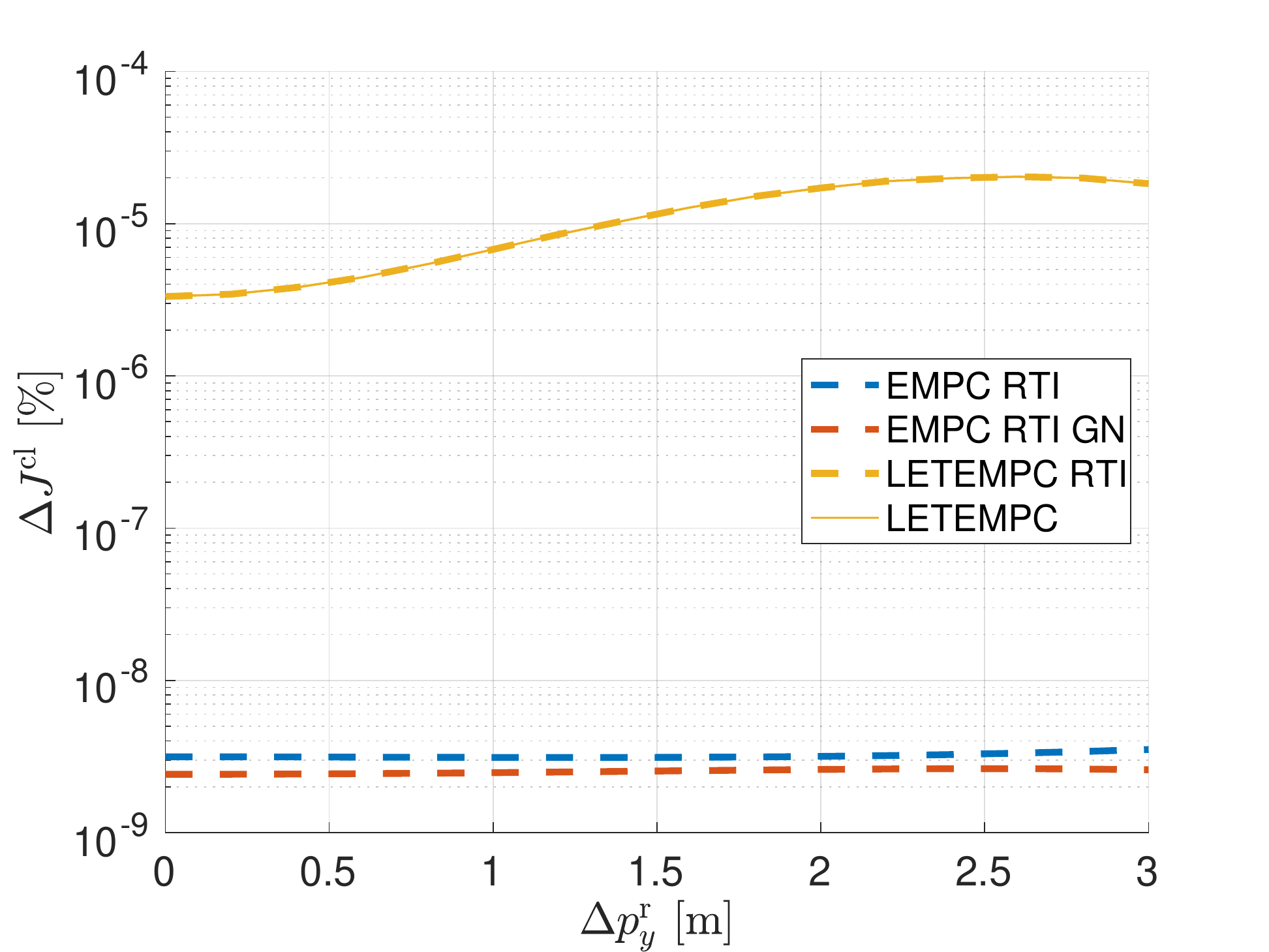}
	\caption{Closed-loop performance for minimum energy driving.}
	\label{fig:veh_cl_cost}
\end{figure}

In Figure~\ref{fig:veh_cl_cost}, we compare various MPC formulations.
As expected EMPC formulations perform better than LETEMPC, though the performance loss is less than $2\cdot 10^{-4} \ \%$. Moreover, the RTI formulation relying on the proposed Hessian approximation performs as well as the one based on exact Hessian. Given its small entity, the slightly better performance is possibly due to numerical inaccuracies. Finally, RTI EMPC based on steepest descent did not stabilize the system, as the iterates diverged.

	We reproduced the simulations with acados~\cite{Verschueren2018}: 
	the sensitivity computation was $2.2$ times faster ($0.9 \, \mathrm{ms}$ vs $2.0 \, \mathrm{ms}$), while the overall RTI step was $1.6$ times faster ($2.9 \, \mathrm{ms}$ vs $4.5 \, \mathrm{ms}$). Note, however, that these numbers depend on the system size, the expression tree complexity, the prediction horizon and the problem formulation. Indeed, in~\cite{DeSchutter2020} the computation times of exact Hessian RTI are reported as $16$ times longer than Gauss-Newton RTI ($213 \ \mathrm{ms}$ vs $13 \ \mathrm{ms}$) for a tethered aircraft.

\section{Conclusions}
\label{sec:conclusions}

In this paper we have investigated efficient algorithms tailored to real-time economic MPC. While the original and rotated EMPC formulations are interchangeable if solved to full convergence, we have proven in theory and verified in practice that they differ if only a limited amount of iterations is performed. 
In order to reduce the computational burden of EMPC, we have proposed a Gauss-Newton-like Hessian approximation which yields fast convergence while only requiring the computation of first-order sensitivities. We have proven that the GN Hessian approximation exists, provided that EMPC is locally stabilizing and we have provided a practical approach to compute it. Simulation results on two examples have demonstrated the effectiveness of the GN Hessian used in RTI EMPC. 

Future research will consider real-time implementations and further investigate the use of exact Hessian with ad-hoc real-time regularization procedures inspired by~\cite{Verschueren2017}.

\bibliographystyle{IEEEtran}
\bibliography{my_bib}

\end{document}